\newif\ifpublish
\algnewcommand{\BlueComment}[1]{\textcolor{blue}{\hfill\(\triangleright\) #1}}
\newcommand{\model}{random asynchronous model\xspace}
\newcommand{\MoDeL}{Random Asynchronous Model\xspace}
\newcommand{\ra}{random asynchrony\xspace}
\newcommand{\Ra}{Random asynchrony}
\newcommand{\Cnf}{\mathcal{C}(n,f)\xspace}
\newcommand{\para}[1]{\smallskip\noindent\textsf{\textbf{#1.}}}
\newcommand{\protocol}{\mathcal{A}\xspace}
\newcommand{\whp}{w.h.p.\xspace}
\newcommand{\as}{a.s.\xspace}
\let\@authorsaddresses\@empty
\title{Byzantine Consensus in the Random Asynchronous Model}
\author{George Danezis}{Mysten Labs and University College London}{}{}{}
\author{Jovan Komatovic}{EPFL}{}{}{}
\author{Lefteris Kokoris-Kogias}{Mysten Labs}{}{}{}
\author{Alberto Sonnino}{Mysten Labs and University College London}{}{}{}
\author{Igor Zablotchi}{Mysten Labs}{}{}{}
\authorrunning{G. Danezis, J. Komatovic, L. Kokoris-Kogias, A. Sonnino, I. Zablotchi}
\keywords{network model, asynchronous, random scheduler, Byzantine consensus} 
\begin{document}

\maketitle
\begin{abstract}
  We propose a novel relaxation of the classic asynchronous network model, called the \textit{\model}, which removes adversarial message scheduling while preserving unbounded message delays and Byzantine faults. Instead of an adversary dictating message order, delivery follows a random schedule. We analyze Byzantine consensus at different resilience thresholds ($n=3f+1$, $n=2f+1$, and $n=f+2$) and show that our relaxation allows consensus with probabilistic guarantees which are impossible in the standard asynchronous model or even the partially synchronous model. We complement these protocols with corresponding impossibility results, establishing the limits of consensus in the \model.

\end{abstract}

\clearpage\newpage
\setcounter{page}{1}
\section{Introduction}\label{sec:intro}


Byzantine Fault-Tolerant (BFT) consensus is a fundamental primitive in distributed computing, serving as the backbone of many applications, including blockchains~\cite{sok-consensus}. Solving BFT consensus in an asynchronous network model~\cite{FLP} is a well-studied problem~\cite{book,dag-rider,ren2017,mahimahi}. Typically, algorithms in this model assume that the adversary controls (1) a subset of faulty processes and (2) message delays, particularly the \textit{message schedule}—the order in which messages are delivered.


However, asynchronous BFT consensus is constrained by restrictive lower bounds~\cite{FLP, byz-generals}. These lower bounds often stem from the adversary's ability to impose arbitrary message schedules. Requiring an algorithm to function under the worst possible scheduling scenario often renders tasks impossible due to a single, highly unlikely counterexample, that is, a pathological schedule that violates the algorithm's properties~\cite{FLP}.

Yet, due to performance considerations, practical implementations of consensus circumvent the limitations of asynchrony by assuming that the schedule is not adversarial. For example, we are aware of asynchronous consensus protocols running in production over commodity wide-area networks for extensive periods of time without a random coin implementation~\cite{narwhal}---so as to avoid the performance overhead of cryptographically-secure randomness---without loss of liveness. This motivates our search for a model to explain this empirical observation.

Another common relaxation that circumvents the limitations of asynchrony is to assume \textit{periods of synchrony}, leading to the widely used \textit{partially synchronous} model~\cite{dwork1988consensus}. However, protocols based on partial synchrony lose liveness outside of these periods, which may occur due to poor network conditions or denial-of-service (DoS) attacks~\cite{consensus-dos}.

In this paper, in order to provide a sound basis for non-adversarial scheduling as assumed by some modern practical systems, and to avoid the limitations of partial synchrony, we propose an alternative relaxation of the asynchronous model: we study asynchronous networks without adversarial scheduling. Specifically, we ask:
\textit{What becomes possible in a network where message delays are unbounded, but the message schedule is not adversarial?}

To explore this question, we introduce a new variant of the asynchronous model, which we call the \textit{\model}. In this model, message delays remain unbounded, and the adversary still controls a subset of processes. However, the message schedule is no longer adversarial; instead, messages are delivered in a \textit{random} order.  
Our aim is not to argue that this model is a more realistic description of real networks; rather, we view it as a useful abstraction, inspired by assumptions made in practice~\cite{narwhal}, that lets us analyze the non-adversarial schedule assumption and explore which tasks it enables under unbounded delays while preserving Byzantine faults.
This work offers a new perspective on the role of adversarial scheduling in the standard asynchronous setting: by removing the adversary’s scheduling power, we explore what becomes achievable and what remains fundamentally impossible.
Our work is a part of the broader line of research that explores relaxations of the fully adversarial scheduler: from the fair-scheduler abstraction of Bracha and Toueg~\cite{BrachaT85}, through Aspnes's noisy scheduling for shared memory~\cite{Aspnes02}, to recent DAG-based BFT systems such as Tusk~\cite{narwhal} and Mahi-Mahi~\cite{mahimahi}.


To isolate the impact of random scheduling, we consider only deterministic algorithms, meaning processes do not have access to local randomness. We analyze the impact of our model on Byzantine consensus at different resilience thresholds: $n = 3f + 1$, $n = 2f + 1$, and $n = f + 2$, where $n$ is the total number of processes, and up to $f$ processes may be faulty.

We find that our model enables consensus protocols that are impossible under standard asynchrony. The key insight is that the \model prevents an adversary from blocking honest parties from communicating indefinitely. In traditional asynchrony, an adversary can delay messages indefinitely to prevent termination (e.g., in the FLP impossibility result~\cite{FLP}). In contrast, in the \model, the adversary cannot control the schedule, ensuring that honest parties can exchange messages within a bounded number of steps with high probability.

This last property is reminiscent of what synchrony guarantees, i.e., that any message sent by a correct process is received within a time bound $\Delta$. This leads to the natural question: \textit{How powerful is the \model with respect to the standard models---asynchrony, partial synchrony, and synchrony?} We answer this question in \Cref{sec:other-models}. In brief, we show that the \model sits between asynchrony and synchrony in terms of task solvability power, yet perhaps surprisingly, is \textit{incomparable} to partial synchrony. 
To illustrate the latter separation, consider the following two tasks: (1) deterministic consensus when at least one process may fail by crashing, and (2) Byzantine consensus for $n < 3f+1$. Task (1) is not solvable in the \model, as we show in \Cref{sec:negative}, but is solvable in partial synchrony~\cite{book}; conversely, task (2) is not solvable in partial synchrony~\cite{dwork1988consensus} yet becomes attainable in the \model with probabilistic safety (\Cref{sec:2f+1}).
Finally, we also show that the \model behaves like a variant of synchrony where timing guarantees hold with high probability 
instead of deterministically, thus confirming the intuitive similarity between the two models.

\subsection{Modeling Challenges}

Designing an asynchronous model with a non-adversarial schedule presented several challenges. Our initial attempt at a round-based model, while mathematically tractable, proved too rigid, enforcing communication patterns that were overly restrictive. We then explored probabilistic scheduling over entire message schedules, but this approach was unintuitive and impractical for analysis.
A more promising approach was to randomly select individual messages for delivery. However, this exposed a critical vulnerability: Byzantine processes could manipulate the scheduling distribution by flooding the system with messages, effectively regaining control over the schedule. Attempts to mitigate this by encrypting messages failed, as the adversary could still infer crucial protocol information from the message patterns and delivery timings. More details about our initial attempts to model random asynchrony can be found in \Cref{sec:model-challenge}.

These challenges led us to our final model: instead of selecting individual messages, the scheduler randomly selects sender-receiver pairs. At each step, a sender-receiver pair $(s, r)$ is chosen, and the earliest pending message from $s$ to $r$ is delivered. This approach prevents Byzantine nodes from biasing the schedule since the probability of a message being delivered depends only on the number of available sender-receiver pairs, not on the volume of messages sent by a single process.

A natural concern is whether removing adversarial scheduling trivializes the consensus problem. We address this in two ways. First, Appendix~\ref{sec:algo-challenge} presents a naive algorithm that fails to solve consensus when $n \leq 2f + 1$, highlighting a fundamental challenge: even though the \model guarantees eventual communication, Byzantine nodes can still equivocate, preventing honest processes from distinguishing between correct and faulty messages. Second, we complement our positive results with corresponding impossibility results, establishing close bounds on what is solvable in the \model.

\subsection{Our Results}

Our key contribution is the introduction of the \model, a novel relaxation of the classic asynchronous model that removes adversarial scheduling while preserving unbounded message delays and Byzantine faults. This relaxation allows us to achieve new bounds that were previously impossible under standard asynchrony.
We then design BFT consensus protocols in this new model, analyzing their feasibility at different resilience thresholds. Finally, we provide both positive results (demonstrating feasibility) and impossibility results (establishing the model's limits), summarized in \Cref{tab:ra_guarantees}.

\begin{table}[t]
  \centering
  \caption{Guarantees achievable (top) and impossible (bottom) in the \model.  “Strong” and “weak” validity are defined in \Cref{sec:model}.}
  \label{tab:ra_guarantees}
  \begin{tabular}{@{}llccc@{}}
    \toprule
    \multicolumn{2}{c}{$n$ versus $f$} & \textbf{Validity} & \textbf{Agreement} & \textbf{Termination} \\
    \midrule
    \multicolumn{5}{c}{\textbf{Positive results}}\\
    \cmidrule(lr){1-5}
        & $n=3f+1$ & Deterministic strong & Deterministic & Almost surely \\
        & $n=2f+1$ & With high probability strong & With high probability & Deterministic \\
        & $n=f+2$  & Deterministic weak  & With high probability & Deterministic \\
    \midrule
    \multicolumn{5}{c}{\textbf{Negative results}}\\
    \cmidrule(lr){1-5}
        & $n=3f+1$ & Deterministic strong & Deterministic & Deterministic \\
        & $n=2f+1$ & Almost surely strong & Almost surely & Deterministic \\
        & $n=f+2$  & With high probability strong & With high probability & Deterministic \\
    \bottomrule
  \end{tabular}
\end{table}

\section{Model \& Preliminaries}\label{sec:model}

\para{Processes} We consider a set $\Pi$ of $n$ processes, up to $f$ of which may be faulty. We consider the Byzantine-fault model, in which faulty processes may depart arbitrarily from the protocol. Throughout the paper, we assume that correct (non-faulty) processes have deterministic logic, i.e., they do not have access to a source of randomness.

\para{Cryptography} We make assumptions that are standard for Byzantine fault-tolerant algorithms: processes communicate through authenticated channels (e.g., using message authentication codes (MACs)) and have access to digital signatures whose properties cannot be broken by the adversary.

\para{Network} The processes communicate by sending messages over a fully-connected reliable network: every pair of processes $p$ and $q$ communicate over a link that satisfies the \textit{integrity} and \textit{no-loss} properties. Integrity requires that a message $m$ from $p$ be received by $q$ at most once and only if $m$ was previously sent by $p$ to $q$. No-loss requires that a message $m$ sent from $p$ to $q$ be eventually received by $q$.

\para{Random asynchrony} Processes send messages by submitting them to the network; a \textit{random scheduler} decides in which order submitted messages are delivered. We assume the scheduler delivers messages one by one, i.e., no two delivery events occur at exactly the same time. For convenience, we use a global discrete notion of time to reflect the sequence of delivery events: time proceeds in discrete steps $0, 1, \ldots$; each time step corresponds to a message delivery event in the system. Processes do not have access to this notion of time (they do not have clocks). Sending a message and performing local computation occur instantaneously, between time steps. Note that this notion of time does not bound message delays: an arbitrary amount of real time can pass between time steps.

We next describe the random scheduler. At any time $t$, let $P(t) \subseteq \Pi^2$ be the set of pairs of processes $(p,q)$ such that $p$ has at least one \textit{pending message} to $q$. We say that a message $m$ from $p$ to $q$ is pending at time $t$ if $p$ sends $m$ before $t$ and $q$ has not received $m$ by time $t$. At each time step $t$, the random scheduler draws a pair $(p,q)$ at random from $P(t)$ and delivers to $q$ the earliest message from $p$. We assume that there exists a constant $\Cnf > 0$ such that, for any $p$ and $q$, the probability that $(p,q)\in P(t)$ is drawn at time $t$ is at least $\Cnf$. In general, $\Cnf$ may depend on $n$ and $f$, but not on $t$. We assume that scheduling draws do not depend on the content of the message being delivered.

Furthermore, we assume that scheduling draws are conditionally independent of each other, given the history of messages sent and delivered. More formally, let \(\{X_t\}_{t\ge 0}\in P(t)\) be the random variables that represent the scheduling draws at each step $t$, and let \(\mathcal F_{t}\) be the history (all messages sent/delivered) up to step $t$. Then our lower bound assumption above can be written as 
\begin{equation}\label{ass:lb}
  \Pr[X_t = (p,q)\mid \mathcal F_{t}] \;\ge\; C(n,f)\quad\text{for any}\quad t\geq0
\end{equation}
and the conditional independence assumption can be written as
\begin{equation}\label{ass:independence}
    \Pr[X_{t_1},\dots,X_{t_k}\mid \mathcal F_{t_k}]
   = \prod_{i=1}^{k}\Pr[X_{t_i}\mid\mathcal F_{t_i}]
  \quad\text{for any } t_1<\dots<t_k.
\end{equation}
Intuitively, the scheduling draws are independent given the message history because the scheduler flips a fresh coin at every step.

\para{Consensus} Our algorithms solve two variants of binary Byzantine consensus: strong  and weak. Strong Byzantine consensus is defined by the following properties:
\begin{description}
    \item[Strong Validity] If all correct processes propose $v$, then correct processes that decide, decide~$v$. 
    \item[Agreement] If correct processes $p$ and $q$ decide $v$ and $w$ respectively, then $v = w$.
    \item[Termination] Every correct process decides some value.
\end{description}

Weak Byzantine consensus~\cite{weak-byz} has the same agreement and termination properties as the strong variant above, but has a different validity property:
\begin{description}
    \item[Weak Validity] If all processes are correct and propose $v$, then processes that decide, decide~$v$.
\end{description}

\para{Deterministic and probabilistic properties}
The probability of a schedule is the probability of the intersection of all its scheduling steps. 
We say that an algorithm $\mathcal{A}$ ensures a property $\mathcal{P}$ \textit{deterministically} if $\mathcal{P}$ holds in every execution of $\mathcal{A}$. Note that any algorithm that ensures a property $\mathcal{P}$ deterministically in the standard asynchronous model, must also ensure $\mathcal{P}$ deterministically in the \model.
Given an algorithm $\mathcal{A}$ and a property $\mathcal{P}$, we say that a schedule $S$ is \textit{bad for $\mathcal{P}$} if there exists an execution $E$ of $\mathcal{A}$ with schedule $S$ such that $\mathcal{P}$ does not hold in $E$.
An algorithm $\mathcal{A}$ ensures a property $\mathcal{P}$ \textit{almost surely (\as)} if the total probability of all schedules that are bad for $\mathcal{P}$ is $0$. 
An algorithm $\mathcal{A}$ ensures a property $\mathcal{P}$ \textit{with high probability (\whp)} if the total probability of all schedules that are bad for $\mathcal{P}$ is negligible; in this paper, a probability is negligible if it approaches $0$ exponentially with some parameter of the algorithm, for any (fixed) $n$ and $f$ (e.g., the number of communication steps). 

\section{$n=3f+1$: Deterministic Safety, Termination Almost Surely}
\label{sec:warmup}

In this section we solve binary Byzantine consensus with deterministic strong validity and agreement, and termination almost surely. 

Our algorithm, shown in \Cref{alg:skeleton}, proceeds in rounds: in each round, processes attempt to decide using a \textsc{Round} procedure; if unsuccessful, processes update their estimate and try again in the next round. The \textsc{Round} procedure is similar to an adopt-commit object~\cite{ac1, ac2}: processes \textit{propose} a value and return a pair $(g,v)$, where $v$ is a value and $g$ is a grade, which can be either \textsc{Commit} or \textsc{Adopt}. If $g=\textsc{Commit}$, then it is guaranteed that all correct processes return the same value $v$ (possibly with different grades). If all correct processes propose the same value $v$, then all correct processes are guaranteed to return $v$ with a \textsc{Commit} grade.

\begin{algorithm}
\caption{Main consensus algorithm for $n=3f+1$: pseudocode at process $i$}
\label{alg:skeleton}
\begin{algorithmic} [1]
\State \textbf{procedure }\textsc{Propose}$(v_i)$: 
\BlueComment{$v_i \in \{0,1\}$}
\State \hskip1em $est_i \gets v_i$
\State \hskip1em $r_i \gets 0$
\State \hskip1em \textbf{while} \textsf{true}:
\State \hskip2em $(g,v) \gets$ \textsc{Round}$(r_i, est_i)$
\State \hskip2em \textbf{if} $g = \textsc{Commit}$: \textsf{decide}$(v)$\label{line:skeleton-decide}
\BlueComment{Only once}
\State \hskip2em $est_i \gets v$
\State \hskip2em $r_i \gets r_i + 1$
\end{algorithmic}
\end{algorithm}

\begin{algorithm}
\caption{Byzantine \textsc{Round} implementation for $n=3f+1$: pseudocode at process $i$}
\label{alg:aac-byz}
\begin{algorithmic} [1]
\State \textbf{procedure} \textsc{Round}$(r,v)$:
\State \hskip1em BRB-Broadcast $\langle \textsc{Init}, r, v \rangle$\label{line:fac-byz-init}
\BlueComment{Phase 1}
\State \hskip1em Wait to BRB-Deliver $\langle \textsc{Init}, r,\_ \rangle$ from $n-f$ processes\label{line:fac-byz-wait-init}
\State \hskip1em $\mathcal{H} \gets$ delivered $\langle \textsc{Init}, r,\_ \rangle$ messages 
\State \hskip1em $proposal \gets$ majority value in $\mathcal{H}$\label{line:fac-byz-adopt-phase1}
\State \hskip1em BRB-Broadcast $\langle \textsc{Echo}, r, proposal, \mathcal{H} \rangle$ to all processes
\BlueComment{Phase 2}\label{line:fac-byz-send-echo}
\State \hskip1em Wait to BRB-Deliver valid $\langle \textsc{Echo}, r, \_, \_ \rangle$ messages from $n-f$ processes\label{line:fac-byz-wait-echo}
\State \hskip1em \textbf{if} $\exists\, v^* \ne \bot$ such that I received $\geq 2f+1$ $\langle \textsc{Echo}, r, v^*, \_ \rangle$ messages:
\State \hskip2em \textbf{return} $\langle \textsc{Commit}, v^* \rangle$\label{line:fac-byz-commit}
\State \hskip1em \textbf{else}:
\State \hskip2em $v^* \gets$ majority value among received $\langle \textsc{Echo}, r, \_ , \_\rangle$ messages
\State \hskip2em \textbf{return} $\langle \textsc{Adopt}, v^* \rangle$\label{line:fac-byz-adopt}
\end{algorithmic}
\end{algorithm}

The core of the algorithm is the \textsc{Round} procedure, shown in \Cref{alg:aac-byz}. There are two types of messages: \textsc{Init} and \textsc{Echo}. Processes rely on Byzantine Reliable Broadcast (BRB)~\cite{book} for communication.\footnote{Any asynchronous BRB protocol is also correct in the \model.} Furthermore, all messages are signed. The algorithm has two phases. In phase 1, each correct process $p$ broadcasts (using BRB) its input value $v$ in an \textsc{Init} message (line~\ref{line:fac-byz-init}), and waits to deliver $n-f = 2f+1$ \textsc{Init} messages (line~\ref{line:fac-byz-wait-init}). Process $p$ adopts as its \textit{proposal} for phase 2, the majority value among the received \textsc{Init} messages (line~\ref{line:fac-byz-adopt-phase1}). Then, in phase 2, $p$ broadcasts an \textsc{Echo} message with $p$'s phase 2 proposal, as well as the $n-f$ (signed) \textsc{Init} messages that justify $v$ to be the majority phase 1 value (line~\ref{line:fac-byz-send-echo}); $p$ waits for $n-f$ valid \textsc{Echo} messages (line~\ref{line:fac-byz-wait-echo}). If all delivered \textsc{Echo} messages are for the same value $v^*$, then $p$ commits $v^*$ (line~\ref{line:fac-byz-commit}); otherwise $p$ adopts the majority value (line~\ref{line:fac-byz-adopt}).

We prove the correctness of our algorithm in \Cref{sec:proofs-det}. The main intuition is that at each round, a favorable schedule can help all correct processes agree (i.e., adopt the same estimate), by causing them to select the same majority value in phase 1 (e.g., by ensuring that they deliver \textsc{Init} messages from the same set of processes). Since the schedule is random, it has a non-zero chance of being favorable at each round. Thus, almost surely, the schedule will eventually be favorable. This is similar in spirit to random coin based Byzantine Randomized Consensus algorithms~\cite{book}, in which correct processes choose their next estimate by tossing a coin, if they do not manage to reach agreement in a given round. Here we are relying on the random schedule instead of the random coin. However, as we show later, in our model, consensus is solvable for settings where it is impossible for standard asynchrony even when equipped with a common coin.

\para{Crash-fault tolerant consensus} A similar approach can be used to solve crash-fault tolerant consensus with $n=2f+1$, with the same guarantees of deterministic safety, as well as termination almost surely. Our algorithm uses the same round-based structure in \Cref{alg:skeleton} and a modified \textsc{Round} procedure, that does not require reliable broadcast, and employs standard point-to-point messages instead. The main intuition is similar to our Byzantine algorithm: the random scheduler eliminates the need for a common coin by ensuring that correct processes eventually deliver messages in a favorable order which leads them to agree and thus terminate. \Cref{sec:crash-fault} gives our algorithm and proves its correctness.

\section{$n = 2f+1$: Deterministic Termination, Safety with High Probability}\label{sec:2f+1}

In this section, we pose $n=2f+1$ and solve strong Byzantine consensus such that safety (validity and agreement) holds with high probability and termination is deterministic.

\Cref{alg:resilient-consensus} shows our proposed protocol. The main idea is as follows: There are $f+1$ phases, each consisting of $R$ communication rounds, where $R$ is a parameter. $R$ is large enough so that correct processes hear from each other at least once within $R$ rounds with high probability. In each round, a correct process sends its set of \textit{accepted values} ($V_i$ in \Cref{alg:resilient-consensus}) to all other processes. A correct process $i$ \textit{accepts} a value from process $j$ at phase $p$ ($p = 1,\ldots,f+1$) if (1) $j$ is the origin of $v$ (i.e., the first signature on $v$ is by $j$), (2) $i$ has not already accepted a value from $j$, and (3) $v$ has valid signatures from $p$ distinct processes. We say that $i$ accepts a value $v$ at phase $p$ if $p$ is the earliest phase at which $i$ accepts $v$ (from any process).

After the $f+1$ phases are over, a correct process decides on the majority value within its set of accepted values (i.e., the value that appears most often). Note that at the end of the communication phases, each correct process must have at least one accepted value, since correct processes sign and send their input value in phase $1$ and the network is reliable.


The main intuition behind this protocol is that, if a correct process accepts a value $v$, then all correct processes will accept $v$ by the end of the execution, and thus all correct processes will have the same set of accepted values. Therefore, all correct processes can use a deterministic rule to decide the same value.

\begin{algorithm}
\caption{Binary Byzantine consensus for $n = 2f+1$; pseudocode for process $i$}
\label{alg:resilient-consensus}
\begin{algorithmic} [1]
\State \textbf{Local variables:}
\State \hskip1em $V_i = \varnothing$, a map from processes to signed values
\bigskip
\State \textbf{procedure} \textsc{Propose}$(v)$:
\State \hskip1em $V_i[i] \gets \textsc{Sign}(v)$
\State \hskip1em \textbf{for} $phase$ in $1\ldots f+1$:
\State \hskip2em \textbf{for} $round$ in $1\ldots R$:
\State \hskip3em Send $(V_i, phase, round)$ to all processes
\State \hskip3em $valid \gets 0$
\State \hskip3em \textbf{while} $valid < n-f$:
\State \hskip4em Receive a $(V_j, p, r)$ message\BlueComment{receive single msg per process, phase and round}
\State \hskip4em \textbf{if} $\exists v \in V_j:$ $v$ is signed by $phase$ distinct processes \textbf{and} $V_i[\textsc{Origin}(v)] = \varnothing$
\State \hskip5em $V_i[\textsc{Origin}(v)] \gets \textsc{Sign}(v)$
\State \hskip4em \textbf{if} $(p,r) = (phase, round)$
\State \hskip5em $valid \gets valid + 1$

\State \hskip1em \textbf{decide} $\textsc{MajorityValue}(V_i)$
\end{algorithmic}
\end{algorithm}

We now prove that \Cref{alg:resilient-consensus} satisfies the consensus properties in \Cref{sec:model}.

\begin{lemma}\label{lem:all-hear}
    With high probability, every correct process receives at least one message from every other correct process in each phase.
\end{lemma}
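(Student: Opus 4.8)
The plan is to fix an arbitrary pair of correct processes $i$ and $j$ and a phase $p$, and bound the probability that $i$ fails to receive any of $j$'s $R$ messages in phase $p$; then union bound over all (constantly many) such pairs and phases. First I would observe that in phase $p$, process $j$ sends a message to $i$ in each of the $R$ rounds, so at any time during phase $p$ there is at least one pending message from $j$ to $i$ — hence $(j,i) \in P(t)$ — until $i$ has received at least $R$ messages from $j$ (one can be slightly careful here: as long as $i$ has received fewer than the number $j$ has sent so far, the pair is pending; and since $j$ does not advance past a round until it collects $n-f$ valid messages, one should check that $j$ keeps sending within the phase, but the key point is simply that $(j,i)$ is a candidate pair whenever $i$ is still missing messages from $j$). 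By the scheduler assumption~\eqref{ass:lb}, at each such step the pair $(j,i)$ is drawn with probability at least $C(n,f)$, independently across steps given the history~\eqref{ass:independence}.

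Next I would count steps. Within phase $p$, every correct process sends $\Theta(R)$ messages, and between any two consecutive rounds of $i$ there must be enough delivery steps for $i$ to collect $n-f$ messages; more usefully, I want a lower bound on the number of global time steps that elapse while $i$ is still in phase $p$ and has not yet heard from $j$. Since each correct process needs to deliver at least $n-f$ messages per round to advance, phase $p$ spans at least $R(n-f) \ge R$ delivery steps globally before $i$ can finish it. During each of these steps, conditioned on the event that $i$ has not yet heard from $j$, the pair $(j,i)$ is pending and is drawn with probability $\ge C(n,f)$. Therefore the probability that $i$ never hears from $j$ throughout phase $p$ is at most $(1 - C(n,f))^{R}$, which is negligible in $R$ for fixed $n,f$. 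A union bound over the at most $n^2 (f+1)$ ordered pairs of correct processes and phases gives a failure probability of at most $n^2(f+1)(1-C(n,f))^{R}$, still negligible in $R$; this establishes the claim.

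The main obstacle is making the "each step of phase $p$ is a fresh trial for the pair $(j,i)$" argument fully rigorous, since the event "$i$ has not yet heard from $j$" is itself defined in terms of the scheduling history, so I must phrase the bound as a conditional one and chain it: formally, I would define, for the relevant block of at least $R$ consecutive time steps, the indicator that $(j,i)$ is drawn, condition on $\mathcal{F}_t$ at each step, use~\eqref{ass:lb} together with the observation that $(j,i)\in P(t)$ on the event of interest, and apply~\eqref{ass:independence} to multiply the per-step bounds — a standard geometric-tail argument but one that needs the conditioning spelled out carefully. A secondary subtlety is confirming that $j$ actually keeps sending to $i$ for long enough within the phase (it does, because $j$ sends in round $1$ unconditionally, and the no-loss property plus the structure of the loop ensure $j$'s round-$1$ message to $i$ remains pending until delivered), so that the pair is genuinely available at every step of the block; I would state this explicitly before invoking~\eqref{ass:lb}.
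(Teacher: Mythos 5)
Your proof takes essentially the same route as the paper's: a per-pair geometric bound of the form $(1-C(n,f))^{\#\text{steps}}$ derived from assumptions~(\ref{ass:lb}) and~(\ref{ass:independence}) together with the counting argument that completing the $R$ rounds of a phase consumes at least $R(n-f)$ delivery steps, followed by a union bound over ordered pairs (which the paper phrases equivalently as an expected count plus Markov's inequality). The only differences are cosmetic: you settle for the looser exponent $R$ instead of $R(n-f)$ and you union-bound over the $f+1$ phases explicitly, and your flagged subtleties (conditioning on the history, the sender still having a pending message) are handled no less rigorously than in the paper's own proof.
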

\begin{proof}
We start by fixing two correct processes $p$ and $q$ and upper bounding the probability that $q$ does not receive any message from $p$ after $R$ iterations. The probability of not delivering $p$'s round-$1$ message to $q$ is at most $(1-\Cnf)$ at each time step. There must be at least $R(n-f)$ time steps for $q$ to complete $R$ iterations, since $q$ waits for at least $n-f$ messages at each iteration, and each message received consumes one time step.
Thus, using assumptions~(\ref{ass:lb}) and~(\ref{ass:independence}), the probability of $q$ not observing $p$'s message is at most 
$$(1-\Cnf)^{R(n-f)} \approx e^{-R\Cnf(n-f)}.$$

To finish the proof, we upper-bound the probability of any correct process not observing the input value of some other correct process. We first compute the expected number of (ordered) pairs of processes $(p,q)$ such that $q$ does not observe the input value of $p$ at the end of the $R$ iterations. There are $n(n-1)$ possible pairs, so this expected value is at most $E = n(n-1)e^{-R\Cnf(n-f)}$. Now, by Markov's inequality, we have that 
$$\Pr(\text{at least one unreachable pair}) \leq E = n(n-1)e^{-R\Cnf(n-f)}.$$
For fixed $n$ and $f$, this probability approaches $0$ exponentially in $R$.
\end{proof}

\begin{lemma}\label{lem:all-accept}
    With high probability, every correct process accepts the input values of every other correct process.
\end{lemma}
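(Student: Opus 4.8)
The plan is to bootstrap from \Cref{lem:all-hear}. I would condition on the high-probability event $\mathcal{E}$ that in phase~$1$ every correct process receives at least one message from every other correct process, and then show that \emph{on} $\mathcal{E}$ every correct process deterministically accepts the input value of every correct process. Since $\Pr[\mathcal{E}]$ is $1$ minus a negligible quantity by \Cref{lem:all-hear} (in fact only its phase-$1$ instance is needed), the lemma follows with the same bound.

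The key structural fact I would establish first is that the \emph{only} value in the whole system whose origin is a given correct process $j$ is $j$'s input value $v_j$. This is because $\textsc{Origin}(v)$ is the first signer of $v$, so any value with origin $j$ carries $j$'s signature first; by unforgeability of signatures only $j$ can create such a value, and a correct $j$ produces a value as originator exactly once -- when it sets $V_j[j]\gets\textsc{Sign}(v_j)$ on its own input at the start of \Cref{alg:resilient-consensus}. (Whenever $j$ later re-signs a value it merely appends its signature and does not become that value's origin.) I would also note that once an entry $V_i[k]$ becomes non-$\varnothing$ it is never overwritten, since the assignment is guarded by the test $V_i[k]=\varnothing$; hence each entry of $V_i$ is monotone.

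Now fix correct processes $i$ and $j$ and work on $\mathcal{E}$. During phase~$1$, process $i$ receives some message $(V_j, 1, r)$ from $j$; since $j$ is correct it placed $\textsc{Sign}(v_j)$ into $V_j[j]$ at the outset, so $v_j\in V_j$ and $v_j$ carries at least one signature -- which meets the signature count required for acceptance in phase~$1$. When $i$ processes this message, either $V_i[j]=\varnothing$, in which case the acceptance test fires and $i$ sets $V_i[j]\gets\textsc{Sign}(v_j)$, i.e.\ accepts $v_j$ (at phase~$1$); or $V_i[j]\neq\varnothing$ already, in which case $i$ had previously accepted some value with origin $j$, which by the structural fact above must be $v_j$. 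Either way $V_i[j]$ holds $v_j$ from then on. Since this argument applies to all correct pairs $(i,j)$ simultaneously on $\mathcal{E}$, every correct process accepts the input value of every correct process on $\mathcal{E}$, which proves the lemma.

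The only delicate points I anticipate are (i) the uniqueness-of-origin argument, which rests on unforgeability together with the observation that correct processes originate exactly one value each, and (ii) checking that the message from $j$ guaranteed by \Cref{lem:all-hear} is indeed fed through the acceptance test while $i$ is still in phase~$1$ and that a single matching value in $V_j$ already suffices -- both of which are immediate once one observes that every received message is passed through the test and that $v_j$ meets the phase-$1$ threshold. No probabilistic reasoning is needed beyond invoking \Cref{lem:all-hear}.
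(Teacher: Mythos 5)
Your proof is correct and follows essentially the same route as the paper: condition on the phase-$1$ event from \Cref{lem:all-hear} and observe that a correct process's messages always carry its signed input value, which meets the phase-$1$ acceptance threshold. Your additional care about the case $V_i[j]\neq\varnothing$ (via uniqueness of origin under unforgeability) is a detail the paper leaves implicit, but it does not change the argument.
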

\begin{proof}
    By \Cref{lem:all-hear}, every correct process receives at least one message from every other correct process in the first phase, \whp Since messages from correct processes always contain their input values with a valid signature, every correct processes $i$ will accept the input value of another correct process $j$ when $i$ receives the first message from $j$. 
\end{proof}

\begin{theorem}
    With $n=2f+1$, \Cref{alg:resilient-consensus} satisfies strong validity \whp
\end{theorem}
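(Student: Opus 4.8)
The plan is to condition on the high-probability event $\mathcal{E}$ supplied by \Cref{lem:all-accept} — that every correct process accepts the input value of every other correct process — and to show that strong validity holds deterministically on $\mathcal{E}$. Since $\mathcal{E}$ fails with only negligible probability, this immediately gives strong validity \whp. So assume all correct processes propose the same value $v$, and work on the event $\mathcal{E}$.

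First I would pin down which entries of a correct process $i$'s map $V_i$ can possibly differ from $v$. For any correct process $j$, the only value $j$ ever signs as an origin is its own input $v$ (the assignment $V_i[i]\gets\textsc{Sign}(v)$ at the start of \textsc{Propose}), so by unforgeability of signatures no value with $\textsc{Origin}=j$ and value $\bar v$ can ever exist in the system. Hence, whenever a correct $i$ accepts a value from a correct $j$, that value must equal $v$; that is, $V_i[j]$ is either $\varnothing$ or $v$ for every correct $j$. On the event $\mathcal{E}$, moreover, $V_i[j]\neq\varnothing$ for every correct $j$, so $V_i[j]=v$ for all $\geq f+1$ correct processes $j$ (including $i$ itself).

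Next comes the counting step. The domain of $V_i$ has at most $n=2f+1$ keys, of which at most $f$ correspond to faulty processes; these are the only keys whose recorded value can be $\bar v$. Therefore the number of entries of $V_i$ equal to $v$ is at least $f+1$, while the number equal to $\bar v$ is at most $f$. Since $f+1>f$, we get $\textsc{MajorityValue}(V_i)=v$, so every correct process that decides, decides $v$, which is exactly strong validity on the event $\mathcal{E}$.

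I do not expect a genuine obstacle here: \Cref{lem:all-accept} does the probabilistic heavy lifting, and the rest is a deterministic $f{+}1$-versus-$f$ majority count. The one point that requires care is the unforgeability argument capping the adversary's influence at $f$ entries of $V_i$; without it, a Byzantine relay could in principle cause a correct process to record $\bar v$ under a correct origin, and the majority argument would break. I would also note in passing that this theorem concerns only correct processes that decide — deterministic termination (every correct process does decide, because the $n-f=f+1$ correct processes, including itself, always eventually deliver their round messages over the reliable network) is established separately.
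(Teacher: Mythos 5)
Your proof is correct and follows essentially the same route as the paper: invoke \Cref{lem:all-accept} to get at least $n-f=f+1$ accepted copies of $v$ at every correct process \whp, then conclude by the $f+1$-versus-$f$ majority count over the at most $n=2f+1$ entries of $V_i$. Your explicit unforgeability step (no value other than $v$ can carry a correct origin's first signature) is left implicit in the paper's proof but is a sound and welcome clarification, not a deviation in approach.
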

\begin{proof}
    Assume that all correct processes propose the same value $v$. Then, by \Cref{lem:all-accept}, all correct processes will accept at least $n-f=f+1$ copies of value $v$ \whp Since $f+1$ is a majority out of a maximum of $n=2f+1$ accepted values, correct processes decide $v$ \whp
\end{proof}

\begin{lemma}\label{lem:accept-early}
    If a value $v$ is accepted by a correct process $i$ at phase $p \leq f$, then all correct processes accept $v$ by the end of phase $p+1$, \whp
\end{lemma}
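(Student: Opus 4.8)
The plan is to induct on the phase number $p$, using \Cref{lem:all-hear} as the engine that turns ``$i$ keeps broadcasting its copy of $v$'' into ``every correct process sees that copy during phase $p+1$'', with high probability. I would first dispose of the case where the origin $o:=\textsc{Origin}(v)$ is correct: since a correct $o$ signs only its own input and signatures are unforgeable, the first signature on any value with origin $o$ can only be $o$'s signature on its input, so $v$ is that input, and \Cref{lem:all-accept} already gives that every correct process accepts $v$ in phase $1$, \whp Hence I may assume $o$ is faulty, fix the hypothesis ``some correct $i$ accepts $v$ from $o$ at phase $p\le f$'', and assume the statement already proved for all phases $<p$.

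The core step is propagation. When $i$ accepts $v$ at phase $p$, condition~(3) of acceptance says $v$ already carries valid signatures from $\ge p$ distinct processes, and condition~(2) (that $i$ had not accepted anything from $o$ yet) says $i$'s own signature is not one of them; so right after $V_i[o]\gets\textsc{Sign}(v)$ the copy of $v$ held by $i$ carries $\ge p+1$ distinct signatures, and from then on $i$ includes this copy in every message it sends, in particular throughout phase $p+1$ (which $i$ reaches, since every correct process completes all $f+1$ phases). By \Cref{lem:all-hear}, \whp every correct $j$ receives at least one message from $i$ during phase $p+1$; that message carries $v$ with $\ge p+1$ signatures, so conditions~(1) and~(3) for $j$ to accept $v$ at phase $p+1$ hold, and only condition~(2) can block it, i.e.\ $j$ may have already accepted some value $w$ from $o$.

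So the heart of the proof is forcing $w=v$ (if $w=v$ then $j$ already accepted $v$ and we are done). Say $j$ accepted $w$ at phase $q$; this occurs no later than the arrival of $i$'s relay, so $q\le p+1$. If $q<p$, the induction hypothesis applied to $j$'s acceptance of $w$ at phase $q\le f$ forces every correct process---in particular $i$---to accept $w$ by the end of phase $q+1\le p$, which is impossible since $i$ accepts $v\ne w$ from $o$ and a correct process accepts at most one value per origin. Thus $q\in\{p,p+1\}$, and \emph{this is the step I expect to be the main obstacle}: plain signature counting is too weak, because a faulty origin could in principle equip two distinct values with $p$ faulty signatures each (which is possible since $p\le f$), so the $q=p$ ``equivocation at the same phase'' case is not ruled out for free. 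I would attack it by carrying a companion ``at most one accepted value per origin'' invariant through the same induction and by tracing the signers of $v$ and $w$: every \emph{correct} signer contributed only upon accepting, hence at a strictly earlier phase, which again feeds the induction hypothesis and forces agreement, while the \emph{faulty} signers number at most $f$; reconciling these facts with the assumption that neither value has yet propagated is what should close the gap. Finally, each invocation of the induction hypothesis and of \Cref{lem:all-hear} costs a negligible-probability event, and there are only polynomially many of them over origins, processes and phases, so a union bound keeps the overall failure probability negligible; all of the probabilistic loss originates from \Cref{lem:all-hear}.
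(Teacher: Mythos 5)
Your propagation core is exactly the paper's argument: $v$ carries at least $p$ valid signatures, $i$ is not among the signers (else it would have accepted $v$ at an earlier phase), so after $i$ signs, the copy has $p+1$ signatures and is relayed as part of $V_i$ throughout phase $p+1$; \Cref{lem:all-hear} then delivers it to every correct process \whp, and that copy meets the signature threshold for acceptance at phase $p+1$. The paper's proof is just this single step---no induction on $p$, no case split on whether the origin is correct---and it addresses prior acceptances only with the parenthetical ``will accept $v$ (if they haven't already).''

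However, your write-up is incomplete, and you flag this yourself: the case where a correct process $j$ has already accepted some $w\ne v$ with the same origin at phase $q\in\{p,p+1\}$ is left open, and your sketched repair does not close it. Tracing the correct signers of $w$ (or of $v$) yields nothing when all signatures come from faulty processes, which is possible whenever $p\le f$---indeed at phase $1$ the Byzantine origin's own signature already suffices---and the ``at most one accepted value per origin'' companion invariant you propose to carry along is precisely the claim in question, not something your induction has established. So the blocking scenario (condition (2) already occupied by a conflicting value) is not ruled out by your argument, and the proposal therefore does not yet constitute a proof of the lemma. You also cannot recover the missing step by aligning with the paper: its proof never engages with this scenario at all, asserting acceptance upon receipt ``if they haven't already,'' so the subtlety you surfaced is genuinely absent from the paper's one-step argument rather than something you overlooked in it.
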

\begin{proof}
    If $i$ accepts $v$ at phase $p$, then $v$ must have signatures from at least $p$ processes. Process $i$ is not one of the $p$ processes, otherwise $i$ would have accepted $v$ at an earlier phase. Since $i$ accepts $v$, $i$ adds its signature to $v$ and will send $v$, as part of $V_i$, to all processes in phase $p+1$. By \Cref{lem:all-hear}, all correct processes will thus receive $v$ by the end of phase $p+1$ \whp, and will accept $v$ (if they haven't already), since $v$ has the required number of signatures.
\end{proof}

\begin{lemma}\label{lem:accept-last}
    If a value $v$ is accepted by a correct process $i$ at phase $f+1$, then all correct processes accept $v$ by the end of phase $f+1$, \whp
\end{lemma}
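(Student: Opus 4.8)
The plan is to reduce \Cref{lem:accept-last} to \Cref{lem:accept-early}. \Cref{lem:accept-early} propagates an acceptance at phase $p$ to all correct processes by the end of phase $p+1$, but it is stated only for $p \leq f$ because there is no phase $f+2$; \Cref{lem:accept-last} closes exactly this gap, and the way I would do it is to show that a value accepted for the \emph{first time} at phase $f+1$ by some correct process was in fact \emph{already accepted by some correct process at a strictly earlier phase} $p \leq f$, so that it is in fact covered by \Cref{lem:accept-early} and no phase $f+2$ is ever needed.

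Concretely, let $o = \textsc{Origin}(v)$ and split on whether $o$ is correct. If $o$ is correct, then the only validly signed value whose first signature is by $o$ is $o$'s own input value, so $v$ is that value, and \Cref{lem:all-accept} already gives that \whp every correct process accepts it (indeed in phase $1$), which is stronger than the claim. The interesting case is $o$ Byzantine. Here I would consider the first correct process $k$ --- first in the global order of events --- to accept $v$, and let $p$ be the earliest phase at which $k$ accepts it. By acceptance condition~(3), at the instant $k$ accepts $v$ the value carries valid signatures from at least $p$ distinct processes; every such signer other than $o$ must have itself accepted $v$ at an earlier point (in \Cref{alg:resilient-consensus} a correct process adds its signature to a value only inside the acceptance step, or, as the origin, in \textsc{Propose}, and the latter is ruled out since $o$ is Byzantine), so by minimality of $k$ none of these signers is correct. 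Hence all $\geq p$ signers of $v$, together with $o$, are Byzantine, and since at most $f$ processes are Byzantine, $p \leq f$. Applying \Cref{lem:accept-early} to $k$ and phase $p$ then gives that \whp all correct processes accept $v$ by the end of phase $p+1 \leq f+1$.

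I expect the only delicate point to be this signature-counting step: one must argue carefully that every signature present on $v$ at the moment $k$ accepts it was placed strictly before that event, so that a correct signer really would contradict $k$ being the first correct acceptor --- this leans on the fact that delivery (hence acceptance) events are totally ordered in global time and on the precise accounting of where a correct process is allowed to sign a value. A secondary point, which I would not re-prove since it is already inherited from \Cref{lem:all-accept} and \Cref{lem:accept-early}, is that ``accept $v$'' should be read as recording a value with $v$'s origin and $0/1$ content, and that \Cref{lem:accept-early}'s ``by the end of phase $p+1$'' refers to each receiver's own phase $p+1$ and is therefore insensitive to processes running their phases at different real times.
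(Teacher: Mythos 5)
Your proof is correct and follows essentially the same route as the paper's: locate a correct signer among the $\geq f+1$ signatures on $v$, conclude that this signer accepted $v$ at some phase $p \leq f$, and invoke \Cref{lem:accept-early}. Your ``first correct acceptor'' minimality argument (together with the separate treatment of a correct origin via \Cref{lem:all-accept}) simply makes explicit the step the paper asserts tersely, namely why that correct signer's acceptance phase must indeed be at most $f$.
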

\begin{proof}
    If $i$ accepts $v$ at phase $f+1$, then $v$ must have signatures from at least $f+1$ processes; $i$ is not among these processes, otherwise $i$ would have accepted $v$ at an earlier phase. Since there are at most $f$ faulty processes, $v$ must have at least one signature from a correct process $j \ne i$.

    So $j$ must have accepted $v$ at an earlier phase $p \leq f$ and therefore, by \Cref{lem:accept-early}, all correct processes will accept $v$ by the end of phase $f+1$ \whp
\end{proof}

\begin{theorem}\label{thm:2f+1-agreement}
    With $n=2f+1$, \Cref{alg:resilient-consensus} satisfies agreement \whp
\end{theorem}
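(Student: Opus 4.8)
The plan is to show that, with high probability, all correct processes end up with the \emph{same} map $V_i$ of accepted values by the end of phase $f+1$; once this holds, they all apply the same deterministic rule $\textsc{MajorityValue}$ and hence decide the same value, giving agreement. So the real content is to establish that the set of accepted values is common to all correct processes \whp.

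First I would argue that every accepted value is ``anchored'' at some phase $p \le f+1$ by one of the earlier lemmas. Concretely, fix any correct process $i$ and any value $v$ that $i$ accepts; let $p$ be the phase at which $i$ accepts $v$. If $p \le f$, then by \Cref{lem:accept-early} every correct process accepts $v$ by the end of phase $p+1 \le f+1$, \whp. If $p = f+1$, then by \Cref{lem:accept-last} every correct process accepts $v$ by the end of phase $f+1$, \whp. In either case, conditioned on the good event, $v$ is accepted by \emph{all} correct processes by the end of the execution. Symmetrically, any value accepted by \emph{some} correct process is accepted by \emph{every} correct process. Hence the sets of accepted values coincide across correct processes. (I also note that a correct process never accepts more than one value per origin, by the guard $V_i[\textsc{Origin}(v)] = \varnothing$, and that $\textsc{Origin}$ is well-defined because the first signature on $v$ pins down a unique originator; so ``same set of accepted values'' really means ``identical maps $V_i$'' and $\textsc{MajorityValue}$ is a deterministic function of this common object.)

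Next I would handle the union bound over the bad events. The arguments above invoke \Cref{lem:all-hear} (via \Cref{lem:accept-early} and \Cref{lem:accept-last}), whose failure probability is negligible; moreover there are at most a constant (in the relevant asymptotic parameter $R$) number of correct processes and values-per-origin, so the union of all the bad events across the finitely many applications of \Cref{lem:all-hear} is still negligible. In fact it is cleanest to just condition on the single good event of \Cref{lem:all-hear} --- ``every correct process receives at least one message from every other correct process in each phase'' --- which already has negligible complement; all the per-value propagation claims then hold deterministically on this event, so no further union bound inside the argument is needed. Finally, on this good event, all correct processes hold the identical map and decide $\textsc{MajorityValue}(V_i)$, so they decide the same value; this establishes agreement \whp.

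The main obstacle, such as it is, is the $p = f+1$ corner case: a value accepted only at the very last phase cannot propagate for another full phase, so one must use the pigeonhole-style observation (already packaged in \Cref{lem:accept-last}) that $f+1$ signatures on $v$ force at least one correct signer $j \ne i$, who must have accepted $v$ at some phase $\le f$, thereby reducing to \Cref{lem:accept-early}. Everything else is bookkeeping: checking that $\textsc{Origin}$ and $\textsc{MajorityValue}$ are genuinely deterministic functions of $V_i$, that the accepted-value maps are therefore identical (not merely equal as sets of values), and that the finitely many invocations of \Cref{lem:all-hear} contribute only a negligible total failure probability.
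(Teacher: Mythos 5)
Your proposal follows the paper's proof exactly: it invokes \Cref{lem:accept-early} and \Cref{lem:accept-last} to show all correct processes share the same set of accepted values by the end of phase $f+1$ \whp, and then concludes via the common deterministic decision rule. The additional bookkeeping (conditioning on the single good event of \Cref{lem:all-hear} and noting $\textsc{MajorityValue}$ is deterministic) is a faithful elaboration of the same argument, not a different route.
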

\begin{proof}
    By \Cref{lem:accept-early} and \Cref{lem:accept-last}, with high probability, correct processes have the same set of accepted values by the end of phase $f+1$, and thus decide the same value.
\end{proof}

\begin{theorem}
    With $n=2f+1$, \Cref{alg:resilient-consensus} satisfies deterministic termination.
\end{theorem}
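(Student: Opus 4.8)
\subsection*{Proof plan}

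The plan is to show that every correct process completes all $(f{+}1)\cdot R$ inner iterations of \Cref{alg:resilient-consensus} and therefore reaches the concluding \textbf{decide} statement. The key point to highlight is that this argument uses \emph{only} the no-loss property of the network---every sent message is eventually delivered---and not the probabilistic scheduling assumptions~(\ref{ass:lb}) and~(\ref{ass:independence}); this is precisely what makes termination deterministic rather than merely almost sure. Note in particular that once a correct process exits the loops, $V_i$ contains at least the entry $V_i[i]$, so $\textsc{MajorityValue}(V_i)$ is well defined and the \textbf{decide} step is a local computation.

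I would totally order the iterations as $(1,1),(1,2),\dots,(1,R),(2,1),\dots,(f{+}1,R)$ and argue by induction over this order the claim: \emph{every correct process eventually starts iteration $(p,r)$}, i.e., eventually executes the \textbf{Send} of $(V_i,phase,round)$ with $(phase,round)=(p,r)$. For the base case $(1,1)$, every correct process invokes \textsc{Propose}, sets $V_i[i]\gets\textsc{Sign}(v)$, enters the two nested \textbf{for} loops with $phase=round=1$, and reaches the \textbf{Send}. For the inductive step, assume every correct process eventually starts iteration $(p,r)$ and fix a correct process $q$. By the inductive hypothesis, each of the at least $n-f$ correct processes (including $q$ itself, which sends to all processes) eventually sends exactly one message tagged $(p,r)$ to $q$, and by no-loss every such message is eventually delivered; since the increment of $valid$ depends only on the tag $(p,r)$ matching $(phase,round)$ and not on the message content, $q$ eventually has $valid \ge n-f$, its inner \textbf{while} loop exits, and $q$ advances to the successor of $(p,r)$---or, if $(p,r)=(f{+}1,R)$, to the \textbf{decide} statement.

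Two points need care. First, the \textbf{Receive} inside the loop is blocking, so I must argue $q$ never gets stuck there permanently: this holds because the $n-f$ correct messages tagged $(p,r)$ destined for $q$ are, by no-loss, always forthcoming, and the stale messages from earlier iterations or arbitrary messages from Byzantine processes that $q$ may also receive in the meantime do not prevent all $n-f$ tagged messages from eventually arriving. Second---and this is the genuinely delicate part---I must check that the threshold $n-f$ is attainable from correct processes alone, so that termination does not depend on Byzantine cooperation: with $n=2f+1$ there are at least $n-f=f+1$ correct processes, and counting $q$'s own (self-delivered) round message this is exactly the required count, so no message from a faulty process is ever needed to complete a round. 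Combining the inductive claim for $(p,r)=(f{+}1,R)$ with the observation that \textbf{decide} is a local step, every correct process decides, which establishes deterministic termination.
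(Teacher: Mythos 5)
Your proof is correct and follows essentially the same route as the paper's: the algorithm runs for a fixed $R(f+1)$ iterations, and each iteration's wait for $n-f$ tagged messages completes because there are at least $n-f$ correct senders (including the process itself) and the no-loss property guarantees delivery, with no reliance on the probabilistic scheduling assumptions. Your version merely makes the induction over iterations and the self-message counting explicit, which the paper leaves implicit.
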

\begin{proof}
    Follows immediately from the algorithm: correct processes only execute for $R(f+1)$ rounds. In each round, a correct process waits to receive $n-f$ messages from that round, which is guaranteed to occur in a finite number of steps, since there are at least $n-f$ correct processes and the network is reliable (no-loss property).
\end{proof}


\section{$n=f+2$: Deterministic Termination and Validity, Agreement \whp}

Interestingly, for $n=f+2$, we can solve weak Byzantine consensus with deterministic validity and termination, and agreement \textit{\whp}, using the same protocol in \Cref{alg:resilient-consensus}.

Weak validity is clearly preserved: if all processes are correct and have the same input value $v$, no other value is received by any process, and thus all processes decide $v$.

\begin{theorem}
    With $n=f+2$, \Cref{alg:resilient-consensus} satisfies deterministic weak validity.
\end{theorem}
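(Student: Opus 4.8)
The plan is to show that when every process is correct and all propose the same value $v$, no value other than $v$ can ever enter any process's map of accepted values, so that the deterministic decision rule $\textsc{MajorityValue}$ is forced to return $v$. Since deterministic termination has already been established, this immediately yields deterministic weak validity; note that the argument uses neither the randomness of the scheduler nor the phase/round counting, only the structure of the messages.

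First I would characterize which values can be written into the maps $V_i$ in \Cref{alg:resilient-consensus}. By inspection, an entry of $V_i$ is set only in two places: (i) at the start of \textsc{Propose}, where process $i$ sets $V_i[i] \gets \textsc{Sign}(v_i)$; and (ii) upon receiving a message $(V_j, p, r)$, where $i$ copies into $V_i[\textsc{Origin}(w)]$ a value $w \in V_j$ that carries valid signatures and whose origin slot in $V_i$ is still empty. Since all processes are correct, every process $j$ executes (i) exactly once, with its own input $v_j = v$; hence the only value a correct process ever signs as an origin is $v$, and a correct process only re-signs a value it has itself accepted.

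Next I would argue by induction on the global time step that every entry ever stored in any $V_i$ is a validly signed copy of $v$. The base case holds because the only writes preceding any delivery are the line-(i) writes $V_i[i] \gets \textsc{Sign}(v)$. For the inductive step, if a correct process $i$ accepts some value $w$ from a message $(V_j, p, r)$, then $w$ was an entry of $V_j$ when $j$ sent that message (integrity of channels rules out fabricated messages), so by the induction hypothesis $w$ is a signed copy of $v$; a valid signature on any $w \ne v$ is excluded because the cryptographic assumptions are unbreakable and, with no faulty processes, no one ever signs a value other than $v$. Thus the new entry of $V_i$ is again a signed copy of $v$.

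Finally, at the end of the $f+1$ phases each correct process $i$ has $V_i$ nonempty (it contains at least $V_i[i] = \textsc{Sign}(v)$) and every entry equals $v$, so $\textsc{MajorityValue}(V_i) = v$ and $i$ decides $v$. Combined with deterministic termination, this gives deterministic weak validity. I do not expect a real obstacle; the only point requiring care is making explicit that, in the absence of faulty processes, the authentication and signature assumptions prevent any value other than $v$ from ever acquiring a valid signature — which is exactly the fact the induction invokes.
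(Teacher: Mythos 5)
Your proposal is correct and follows essentially the same approach as the paper: with all processes correct and proposing $v$, no other value ever appears in any message or accepted-value map, so the deterministic majority rule must output $v$. Your version merely spells out, via an induction on delivery steps, what the paper states in one line.
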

\begin{proof}
    If all processes are correct and propose the same value $v$, then all $(V_i, phase, round)$ messages will have $v$ as their value, so no process can decide any other value.
\end{proof}

\begin{theorem}
    With $n=f+2$, \Cref{alg:resilient-consensus} satisfies agreement \whp
\end{theorem}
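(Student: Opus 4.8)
The plan is to reduce to the analysis already carried out for $n=2f+1$. First I would establish that, with high probability, all correct processes hold \emph{identical} accepted-value maps $V_i$ at the end of phase $f+1$; once this is done, agreement is immediate, because every correct process decides $\textsc{MajorityValue}(V_i)$ and $\textsc{MajorityValue}$ is a deterministic function, so equal maps force equal decisions --- this is exactly the final step in the proof of \Cref{thm:2f+1-agreement}.

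The key observation is that the three lemmas already proved for \Cref{alg:resilient-consensus} --- \Cref{lem:all-hear}, \Cref{lem:accept-early}, and \Cref{lem:accept-last} --- hold verbatim when $n=f+2$, because none of their proofs actually uses the equality $n=2f+1$. \Cref{lem:all-hear} only invokes the random-scheduler lower bound and the conditional-independence assumption, together with the fact that each round terminates in finitely many steps (which still holds, since there are $n-f=2$ correct processes, enough to complete every round's wait); I would re-check its concluding estimate in this regime, where it reads $n(n-1)\,e^{-R\,\Cnf\,(n-f)}=n(n-1)\,e^{-2R\,\Cnf}$, still negligible in $R$. \Cref{lem:accept-early}'s argument is purely about signature counts plus \Cref{lem:all-hear}, and \Cref{lem:accept-last} uses only that at most $f$ of the $\ge f+1$ signers of a value can be faulty, hence at least one is correct --- a claim that needs only $n\ge f+1$.

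Then I would combine the lemmas as before. Consider any value $v$ accepted by at least one correct process, and let $p$ be the earliest phase at which some correct process accepts $v$. If $p\le f$, \Cref{lem:accept-early} gives that \whp every correct process accepts $v$ by the end of phase $p+1\le f+1$; if $p=f+1$, \Cref{lem:accept-last} gives the same conclusion directly. Since the origin of a value is uniquely pinned down by its first signature, at most $n$ distinct values are ever accepted by any correct process, so a union bound over these $\le n$ values and over the $f+1$ phases keeps the total failure probability negligible. Hence \whp every correct process's map $V_i$ contains exactly the set of values accepted by \emph{any} correct process, i.e.\ all the $V_i$ coincide, and the decisions agree.

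The hard part here is not a calculation but a verification: I must make sure that nothing in the $n=2f+1$ analysis quietly relied on quorum intersection or on quorums having majority size $f+1$ --- it does not, since correctness rests on the random scheduler forcing all-to-all delivery (\Cref{lem:all-hear}) rather than on intersecting quorums --- and I must spell out carefully the union bound that upgrades the per-value \whp guarantees of \Cref{lem:accept-early} and \Cref{lem:accept-last} into the ``all $V_i$ are equal'' statement that agreement needs.
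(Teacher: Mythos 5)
Your proposal is correct and follows essentially the same route as the paper: observe that Lemmas~\ref{lem:all-hear}, \ref{lem:accept-early}, and \ref{lem:accept-last} hold unchanged for $n=f+2$, then conclude as in \Cref{thm:2f+1-agreement} that \whp all correct processes end phase $f+1$ with the same accepted-value maps and hence decide identically. Your explicit union bound over the at most $n$ accepted values merely spells out a step the paper leaves implicit.
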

\begin{proof}
    Lemmas~\ref{lem:all-hear}, \ref{lem:all-accept}, \ref{lem:accept-early}, and \ref{lem:accept-last} still hold: their proofs are also valid if $n=f+2$. Thus the proof of this theorem is the same as the proof of \Cref{thm:2f+1-agreement}: By \Cref{lem:accept-early} and \Cref{lem:accept-last}, with high probability, correct processes have the same set of accepted values by the end of phase $f+1$, and thus decide the same value.
\end{proof}

\begin{theorem}
    With $n=f+2$, \Cref{alg:resilient-consensus} satisfies deterministic termination.
\end{theorem}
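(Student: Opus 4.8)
The plan is to reuse, essentially verbatim, the termination argument from the $n=2f+1$ case, since \Cref{alg:resilient-consensus} is the same protocol and the round bound it executes does not depend on the relationship between $n$ and $f$. Concretely, I would first observe that every correct process runs the nested loops for exactly $R(f+1)$ iterations of the inner \textbf{for} loop and then decides; since $R$ and $f$ are fixed parameters, this is a finite number of rounds, so it suffices to show that each individual round completes after finitely many time steps.

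Next I would isolate the only potentially blocking statement, namely the \textbf{while} $valid < n-f$ loop, and argue it cannot block forever. In each round $(phase, round)$, every correct process sends its message $(V_i, phase, round)$ to all processes; there are $n-f$ correct processes (here $n-f = 2$), and by the no-loss property of the network, each of these $n-f$ messages is eventually delivered to every correct recipient. Each such delivery has its tag $(p,r)$ equal to the current $(phase, round)$, so it increments $valid$. Hence after finitely many time steps a correct process has received at least $n-f$ matching messages, $valid$ reaches $n-f$, and the loop exits; the round therefore completes. Summing over the $R(f+1)$ rounds gives termination in a finite number of steps.

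I do not expect a genuine obstacle here; the one point worth stating carefully is the bookkeeping when $n-f$ is as small as $2$: a correct process must be able to count its own round-$(phase,round)$ broadcast toward the $n-f$ messages it waits for (equivalently, the model treats a process as delivering the messages it sends to itself), so that the two correct processes alone suffice to unblock the loop even if every faulty process stays silent. With that convention — which is the same one implicitly used in the $n=2f+1$ termination proof — the argument goes through unchanged, and I would simply remark that the proof of the $n=2f+1$ termination theorem applies here word for word.
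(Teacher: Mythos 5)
Your proof is correct and follows essentially the same argument as the paper: the algorithm runs for a fixed $R(f+1)$ rounds, and each round's wait for $n-f$ correctly tagged messages terminates because the $n-f$ correct processes all send valid messages for that phase and round and the network satisfies no-loss. Your explicit remark about counting a process's own message (self-delivery) when $n-f=2$ is a reasonable clarification of the same convention the paper uses implicitly, not a different approach.
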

\begin{proof}
    Correct processes only execute for $R(f+1)$ rounds. In each phase and round, a correct process waits to receive $n-f$ valid messages from that phase and round. This wait is guaranteed to terminate since there are at least $n-f$ correct processes, correct processes can always produce a valid message (a message $(V_i, p, r)$ is valid if $p$ and $r$ are equal to the current phase and round, respectively), and the network is reliable (no-loss property).
\end{proof}
\section{Negative Results}\label{sec:negative}
 
In this section we provide negative results that closely match our positive results from previous sections. Intuitively, we show that in the \model it is not possible to obtain Byzantine consensus protocols with more powerful guarantees than the protocols we propose in this paper. This shows that the \model, while avoiding some restrictions and impossibilities of the standard asynchronous model, is not overly permissive. Concretely, we prove the following three results. 

\begin{theorem}\label{thm:flp}
   No protocol can solve consensus in the \model with deterministic strong validity, agreement, and termination, if at least one process may fail by crashing.
\end{theorem}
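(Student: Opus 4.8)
The natural approach is to mirror the classical FLP impossibility argument, but adapted to the \model. The key realization is that the \model does not help circumvent FLP-style impossibilities because a random scheduler can still, with positive probability, reproduce any finite prefix of an adversarial schedule; and an infinite ``bad'' schedule for FLP arises as a limit of such finite prefixes, each of which has positive probability. Concretely, suppose for contradiction that a protocol $\mathcal{A}$ solves consensus with deterministic strong validity, agreement, and termination, tolerating one crash fault. Since these properties are required to hold \emph{deterministically} (in every execution), $\mathcal{A}$ must in particular work correctly in every execution whose schedule has positive probability under the random scheduler.

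First I would establish that every finite schedule prefix that is ``legal'' (i.e., at each step delivers the earliest pending message of some sender-receiver pair that currently has a pending message) occurs with positive probability. This follows directly from assumption~(\ref{ass:lb}): at each step the drawn pair is in $P(t)$ with probability at least $\Cnf > 0$, and by the conditional independence~(\ref{ass:independence}) the probability of any length-$k$ legal prefix is at least $\Cnf^{k} > 0$. Hence any execution of $\mathcal{A}$ built from a legal schedule (finite or infinite) is a genuine execution the protocol must handle, and moreover any execution reachable in the standard asynchronous model with a crash fault — where the crash is simulated by the crashed process simply stopping, and its never-delivered messages correspond to pairs that are never scheduled — is reachable here as well, as a limiting event. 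The one subtlety is the crashed process: in our model the no-loss property forces eventual delivery, so a truly crashed process's in-flight messages would still be delivered. I would handle this by noting that the FLP argument only needs one process to be \emph{silent from some point on} from the perspective of the deciders; equivalently, I can run the standard FLP construction with the faulty process participating but with the scheduler delaying its outgoing messages arbitrarily long, and argue that the set of executions in which this process is ``effectively crashed up to the deciding step'' still has positive probability for every finite truncation.

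Given this, I would replay the FLP proof: (i) by validity there exist $0$-valent and $1$-valent initial configurations, hence a bivalent one; (ii) from any bivalent configuration there is a single message delivery leading to another bivalent configuration, or else one reaches a ``critical'' configuration where two deliveries to the same process produce opposite valencies, which a crash of that process contradicts; (iii) iterating, one constructs an infinite execution that never decides, contradicting termination. Each step of this construction extends a finite legal schedule prefix, and each such prefix has positive probability, so the resulting non-deciding schedule lies in a set of schedules that is bad for termination and has positive probability — contradicting even the \as (let alone deterministic) termination requirement, and a fortiori the deterministic one demanded in the statement.

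The main obstacle, and the place the argument needs the most care, is reconciling FLP's use of a crashed (permanently silent) process with our network's no-loss guarantee and the random scheduler's insistence on eventually draining every pending pair. I expect to resolve this by observing that FLP's bivalence chain is built one step at a time and the contradiction with termination is obtained by showing that for \emph{every} finite number of steps the execution can be kept undecided; translated to our setting, this says the event ``$\mathcal{A}$ has not decided after $t$ steps'' has probability bounded below by $\Cnf^{\,t}$ times the (positive) probability of the specific bivalence-preserving continuation, for all $t$ — so the probability of never deciding is positive (indeed, one shows the schedule set witnessing non-termination has positive measure), which already breaks the required guarantee. An alternative, cleaner route that sidesteps the no-loss tension entirely is a bivalence/indistinguishability argument that never needs a process to be permanently crashed: use the crash only to kill a single process at a single critical step, combined with the fact that the remaining $n-1 \ge 2$ correct processes must still decide — this is exactly the structure of the classical proof and transfers verbatim once the positive-probability-of-legal-prefixes lemma is in hand.
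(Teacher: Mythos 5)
Your overall route is the same as the paper's: the paper proves this theorem with a short sketch observing that the FLP construction carries over to the \model, yielding an infinite bivalent (hence non-deciding) execution, so deterministic termination cannot hold together with deterministic safety. Your bivalence replay, and your closing observation that the crash is only invoked hypothetically at critical configurations while in the limiting non-deciding run every message is delivered and no process is permanently silent, is exactly the right way to defuse the apparent tension with the no-loss property.

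However, one step of your argument is genuinely wrong and should be cut: the claim that the set of non-deciding schedules has \emph{positive} probability, ``contradicting even the \as termination requirement.'' From $\Pr[\text{length-}t\text{ prefix}] \ge \Cnf^{\,t}$ you cannot conclude anything about the infinite intersection, since these bounds tend to $0$; and the conclusion is in fact false --- the paper's own algorithms (\Cref{sec:warmup} and \Cref{sec:crash-fault}) achieve deterministic safety together with almost-sure termination under crash faults, so for those protocols the non-terminating schedules necessarily have probability $0$. Fortunately the theorem only asserts impossibility of \emph{deterministic} termination, and for that no probability machinery is needed at all: ``deterministic'' means the property holds in every execution, so it suffices that the FLP non-deciding execution is a \emph{legal} schedule of the \model (each step delivers the earliest pending message of some pair in $P(t)$, FIFO per link --- which the FLP construction can be made to respect), regardless of its probability. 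Your positive-probability-of-finite-prefixes lemma only becomes relevant for the probabilistic strengthenings, and there the paper argues differently (\Cref{thm:flp-p1}): deterministic termination forces a decision inside a \emph{finite} bivalent prefix, and that finite prefix has non-zero probability, which is what breaks almost-sure safety.
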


\begin{theorem}\label{thm:2f+1}
    With $n=2f+1$, no protocol for Byzantine consensus in the \model can ensure strong validity and agreement with probability $1$, as well as deterministic termination. 
\end{theorem}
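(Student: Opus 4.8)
The plan is to first upgrade the probability-$1$ safety guarantees to \emph{deterministic} ones, and then appeal to the impossibility already established as \Cref{thm:flp}. Concretely I would prove: (A) in the \model, if an algorithm ensures validity (resp.\ agreement) with probability $1$, then it ensures it deterministically; and then combine (A) with the assumed deterministic termination and \Cref{thm:flp}. Note that deterministic termination is precisely what makes the statement non-vacuous — an algorithm that never decides trivially satisfies every safety property with probability~$1$.

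\emph{Claim (A).} The key observation is that a violation of validity or of agreement is always a finite-time event: agreement is broken at the step of the later of two disagreeing decisions, and validity at the step where some correct process decides a value that no correct process proposed. Suppose the algorithm admits an execution $E$, with schedule $S$ and adversary strategy $\beta$ (which we may take to be deterministic and a function of the history), in which the violation is complete by step $t$. Then any schedule agreeing with $S$ on steps $0,\dots,t$, run against the same $\beta$, coincides with $E$ through step $t$ and therefore also witnesses the violation; hence it is bad for the property. By assumptions~(\ref{ass:lb}) and~(\ref{ass:independence}), the set of all such schedules has probability at least $C(n,f)^{t+1}>0$. Consequently, if the property holds with probability~$1$ no such $E$ can exist, i.e.\ the property holds in every execution. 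Applying this to validity and to agreement yields deterministic strong validity and deterministic agreement.

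\emph{Conclusion.} Together with the hypothesized deterministic termination, Claim~(A) would give a protocol that, with $n=2f+1$, deterministically ensures strong validity, agreement, and termination. Since $n=2f+1\ge 3$ forces $f\ge 1$, at least one crash — a special case of Byzantine behavior — lies within the fault budget, so \Cref{thm:flp} applies and rules this out, a contradiction. (If one prefers a self-contained argument that also exposes why the threshold is $n=2f+1$, Claim~(A) reduces the problem to a purely deterministic impossibility, which can be obtained by the classical three-world ``split-brain'' construction for $n\le 3f$: partition $\Pi=A\cup B\cup C$ with $|C|=f$ and $|A|,|B|\ge 1$; all correct processes propose $0$ in world $W_A$ and $1$ in $W_B$; in the mixed world $A$ proposes $0$, $B$ proposes $1$, $C$ is Byzantine and behaves toward $A$ as the correct processes of $W_A$ do and toward $B$ as those of $W_B$ do, while the schedule delivers no $A$--$B$ message before $A$ and $B$ have decided — feasible since each side needs only $n-f$ messages and $|A\cup C|,|B\cup C|\ge n-f$ — so $A$ decides $0$, $B$ decides $1$, and agreement fails in an execution with only $f$ faults.)

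The step I expect to be the main obstacle is Claim~(A): one has to make precise that a safety violation is genuinely pinned down by a finite prefix of the schedule together with a finite prefix of the adversary's behavior (in particular that the adversary may be taken deterministic and history-dependent), so that assumption~(\ref{ass:lb}) indeed yields a positive-probability family of bad schedules. Everything after that is a short reduction to a classical impossibility.
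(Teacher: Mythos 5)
Your proposal is correct, but it takes a genuinely different route from the paper. The paper never argues about \Cref{thm:2f+1} directly: it proves the stronger \Cref{thm:flp-p1} by re-running the FLP bivalence argument inside the probabilistic model --- deterministic termination forces a decision within a finite prefix $\pi$ of the infinite bivalent execution, $\pi$ has probability at least $C(n,f)^{|\pi|}>0$, and deciding while bivalent makes $\pi$ bad for agreement or validity, so safety cannot hold almost surely. You instead factor the argument into a general upgrade lemma (Claim (A): in this model, any property whose violation is witnessed at a finite step, such as strong validity or agreement, holds deterministically as soon as it holds with probability $1$, because the cylinder of the violating finite prefix, run against the same finite prefix of adversary behavior and the same inputs, consists entirely of bad schedules of total probability at least $C(n,f)^{t+1}$) and then a reduction to the deterministic impossibility \Cref{thm:flp}, using that $f\ge 1$ places a single crash within the Byzantine budget. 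Both proofs rest on the same two ingredients --- finite prefixes have non-negligible probability, and FLP survives the removal of adversarial scheduling --- but your decomposition is more modular: Claim (A) combined with \Cref{thm:flp} actually re-derives \Cref{thm:flp-p1} itself, and it cleanly explains why the same upgrade cannot apply to termination (not finitely witnessed), consistent with the paper's $n=3f+1$ protocol having only almost-sure termination. The trade-off is that you lean on \Cref{thm:flp}, which the paper itself supports only with a proof sketch, while the paper's route inlines the bivalence argument once; the rigor gap you flag in Claim (A) (pinning the violation to a finite schedule prefix plus a finite, history-dependent adversary prefix so that every extension is a bad schedule in the paper's sense) is real but is exactly the level of care the paper's own proofs exercise. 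Minor caution: the parenthetical split-brain variant is sketchier than stated --- the step ``each side needs only $n-f$ messages'' must itself be derived from deterministic termination and indistinguishability from worlds in which the opposite partition is silent-faulty --- but your main route does not depend on it.
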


\begin{theorem}\label{thm:f+2}
    With $n=f+2, f\geq 2$, no protocol for Byzantine consensus in the \model can ensure strong validity and agreement with high probability, while ensuring deterministic termination.
\end{theorem}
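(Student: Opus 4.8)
The plan is to prove \Cref{thm:f+2} via an indistinguishability (partitioning) argument, adapting the classic $n \le 3f$ Byzantine impossibility of Dwork, Lynch, and Stockmeyer to the \model and to the \whp regime. Since $f \ge 2$, we have at least two faulty processes to play with, which lets us split $\Pi$ into three groups $A$, $B$, and $C$, where $|A| = |B| = f$ (\,wait---with $n = f+2$ this forces $|C| = 2 - f$, which is only sane for $f \le 2$; the right split is $|C| = 2$ correct, and $A, B$ partitioning the remaining $f$ processes with, say, $|A| = \lceil f/2 \rceil$, $|B| = \lfloor f/2 \rfloor$\,). The key point is that $|A| \le f$ and $|B| \le f$, so in each of the two scenarios below the group playing Byzantine is within the fault budget.

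First I would set up the three worlds. In World~0, processes in $A$ are Byzantine, everyone in $B \cup C$ is correct and proposes $0$; in World~1, processes in $B$ are Byzantine, everyone in $A \cup C$ is correct and proposes $1$. By strong validity (which holds \whp), in World~0 the correct processes decide $0$ \whp, and in World~1 they decide $1$ \whp. Now I construct a hybrid World~$H$ in which \emph{all} of $A \cup B$ is replaced by a single adversarial behaviour: the Byzantine processes in $A$ simulate toward $C$ exactly the messages a correct $A$ would send in World~1 (pretending their input is $1$), while the Byzantine processes in $B$ simulate toward $C$ exactly what a correct $B$ would send in World~0 (pretending their input is $0$); this is possible because $|A| + |B| = f$ and these processes are all faulty in World~$H$. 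In World~$H$ the only correct processes are those in $C$, and \emph{weak} validity imposes nothing (not all processes are correct), so the adversary is free to drive $C$'s two members to disagree---\emph{provided} the schedule cooperates to make $C$'s view in World~$H$ match World~1 for one member of $C$ and World~0 for the other.

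The crux, and the place where the \model differs from standard asynchrony, is the scheduling. In standard asynchrony the adversary simply picks the indistinguishable schedule; here the schedule is random, so I must instead argue that the \emph{bad} schedule---the one that makes the two members of $C$ have locally indistinguishable views from World~1 and World~0 respectively for the entire bounded execution---occurs with non-negligible probability. Because the protocol has deterministic termination, every correct process (in particular both members of $C$) decides after a \emph{bounded} number $T = T(n,f)$ of its own steps, hence after a bounded number of global time steps. I would fix a target finite schedule prefix $\sigma$ of the relevant length that realizes the desired indistinguishability, and use assumption~(\ref{ass:lb}): each of the (boundedly many) scheduling draws matches $\sigma$ with probability at least $\Cnf$, and by the conditional independence~(\ref{ass:independence}) the whole prefix $\sigma$ is realized with probability at least $\Cnf^{\,|\sigma|} = \Cnf^{\,O(T)}$, which is a positive constant depending only on $n$ and $f$---\emph{not} negligible in the algorithm's parameters. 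Conditioned on this event, one member of $C$ decides $1$ (as in World~1) and the other decides $0$ (as in World~0), so agreement fails. Since this happens with probability bounded below by a constant, agreement cannot hold \whp, and combined with the \whp strong validity we assumed, we reach a contradiction.

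The main obstacle I anticipate is making the hybrid/simulation argument airtight: I need the Byzantine coalition $A \cup B$ in World~$H$ to be able to \emph{consistently} present World-1 messages to one member of $C$ and World-0 messages to the other, which requires that the two members of $C$ do not exchange messages that would expose the inconsistency within the decision horizon---this is exactly what the fixed schedule prefix $\sigma$ must be engineered to delay. I would also need to check the edge case $f = 2$ (so $|A| = |B| = 1$, $|C| = 2$) carefully, and confirm that the bounded decision time $T$ indeed translates to a bounded number of global steps (it does, since each global step delivers exactly one message and each correct process sends/processes only finitely many messages before deciding). A secondary subtlety is that strong validity is only assumed \whp, not deterministically, so I should phrase the contradiction quantitatively: the probability of the bad schedule is $\ge \Cnf^{O(T)}$, a fixed constant, which cannot be dominated by the negligible failure probabilities of validity in Worlds~0 and~1 for large enough algorithm parameter---hence some execution violates agreement while validity holds, the desired contradiction.
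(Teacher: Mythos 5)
Your plan hinges on engineering a finite schedule prefix $\sigma$ in which both members of $C$ decide while exchanging no messages with each other, so that the coalition $A\cup B$ can present a World-1 view to one of them and a World-0 view to the other. There are two genuine gaps here. First, the uniform bound $T(n,f)$ you invoke does not follow from deterministic termination: termination in \emph{every} execution does not yield a decision-time bound that is uniform over schedules, because the infinite schedules that would defeat a compactness argument (those that starve a link forever) are not valid executions of the model at all (no-loss), so they never contradict termination. Second, and more fundamentally, the existence of $\sigma$ is precisely the crux and is not established: in the \model the scheduler cannot suppress the $c_0\leftrightarrow c_1$ link forever, the channels are authenticated so $A\cup B$ cannot forge $c_0$'s messages to $c_1$, and a protocol may legitimately postpone its decision until it has heard from the other correct process without violating deterministic termination --- the paper's own \Cref{alg:resilient-consensus} at $n=f+2$ behaves exactly this way when the Byzantine processes stay silent. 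So to complete your route you would need an additional, nontrivial argument that every terminating protocol can be driven (with non-negligible probability) to decide on Byzantine messages alone before any $C$-internal delivery; your ``main obstacle'' paragraph names this but does not resolve it. A further wrinkle is that World~0 is defined via Byzantine $A$ simulating ``correct $A$ in World~1'' while World~1 is defined via Byzantine $B$ simulating ``correct $B$ in World~0''; as stated this is circular and needs a careful joint, step-by-step construction. Finally, the way you reconcile \whp validity is imprecise: you need the conclusion that the \emph{specific} prefix $\sigma$ is bad for validity or agreement (some execution with that schedule violates one of them), not that validity ``holds'' on it because it holds \whp overall.

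For contrast, the paper's proof sidesteps all of these issues. It takes a single execution $E$ in which \emph{all} processes are correct, with $p_1,p_2$ proposing $0$ and $p_{n-1},p_n$ proposing $1$ (this is where $f\ge 2$, i.e.\ $n\ge 4$, enters); deterministic termination makes the decision prefix $\pi$ finite, hence of non-negligible probability by assumptions~(\ref{ass:lb}) and~(\ref{ass:independence}). It then keeps the schedule and every process's behavior completely unchanged and merely relabels which $n-2=f$ processes count as Byzantine in three executions: strong validity forces $p_1$ to decide $0$ when only $p_1,p_2$ are correct; indistinguishability to $p_1$ plus agreement forces $p_n$ to decide $0$ when only $p_1,p_n$ are correct; indistinguishability to $p_n$ then contradicts strong validity when only $p_{n-1},p_n$ (both proposing $1$) are correct. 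No link has to be delayed, no simulation or cross-world consistency is needed, and no uniform time bound is required; the conclusion is that $\pi$ itself is bad for validity or agreement, which directly contradicts the \whp guarantee as defined in \Cref{sec:model}. I would recommend either adopting that relabeling argument or supplying the missing step about forcing decisions without $C$-internal communication, which is where your current proposal fails.
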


\begin{proof}[Proof sketch for \Cref{thm:flp}]
    This result is equivalent to the FLP impossibility~\cite{FLP} in the \model, and the FLP proof holds in our model as well. Essentially, if at least one process can fail by crashing, there exists an infinite bivalent execution (the same execution as constructed in the FLP proof), which prevents processes from deciding without breaking agreement.
\end{proof}

Instead of proving \Cref{thm:2f+1} directly, we can prove the following stronger result, which also implies \Cref{thm:2f+1}.
\begin{theorem}\label{thm:flp-p1}
    No protocol for consensus in the \model can ensure strong validity and agreement almost surely, as well as deterministic termination, if at least one process may fail by crashing.
\end{theorem}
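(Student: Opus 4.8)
The plan is to adapt the classical FLP argument~\cite{FLP} to the \model, paying attention to the fact that in our model schedules are random rather than adversarial. The key observation is that FLP does not really require the adversary to have full freedom at \emph{every} step; it only needs the ability to eventually realize one particular bad schedule, and to do so in a way that keeps the execution bivalent forever. Concretely, I would first recall the FLP setup: assume for contradiction a protocol $\mathcal{A}$ that ensures strong validity and agreement almost surely and termination deterministically, tolerating one crash. Since termination is deterministic, every correct process decides in every execution, in particular in every execution with a schedule that has positive probability under the \model scheduler. Because validity and agreement hold almost surely, the set of schedules on which either property fails has probability $0$; call a schedule \emph{good} if it violates neither. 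I would argue that the set of good schedules, though of full measure, still contains a rich enough family of finite prefixes to run the FLP indistinguishability machinery: in particular, for any finite execution prefix, there is a positive-probability continuation realizing any given next delivery among pending messages (by assumption~(\ref{ass:lb}), $\Pr[X_t=(p,q)\mid\mathcal F_t]\ge \Cnf>0$ whenever $(p,q)\in P(t)$), and hence any finite schedule prefix respecting ``earliest pending message'' delivery occurs with positive probability.

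Next I would reconstruct the three standard FLP ingredients in this setting. (i) \emph{Bivalent initial configuration}: by strong validity and a standard connectivity/pigeonhole argument over the $2^n$ input vectors, and using the one allowed crash to bridge adjacent input vectors, there is an initial configuration that is bivalent (both decisions reachable via positive-probability schedules). Here ``reachable'' must mean ``reachable by a finite schedule prefix of positive probability,'' which is fine since finite prefixes always have positive probability. (ii) \emph{Forever-bivalent extension}: from a bivalent configuration $C$ with a pending message-delivery event $e=(p,q)$ applicable at $C$, if every $e$-successor of every configuration in the set of $e$-free-reachable configurations were univalent, a case analysis (the ``hexagon''/commuting-events lemma, using that the lone faulty process can be the one whose step is delayed) produces a contradiction; so there is always a way to apply $e$ and stay bivalent. (iii) Iterating (ii) while being careful to eventually deliver every message so that the constructed schedule is a legal \model schedule, i.e.\ one in which no message is pending forever — this is exactly the point where FLP uses a fair message-delivery discipline — yields an infinite execution in which no correct process ever decides.

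The final and most delicate step is to argue that this infinite, never-deciding execution corresponds to a \emph{positive-measure} (or at least positive-probability-prefix, hence nonzero-probability) set of schedules, which then contradicts \emph{either} deterministic termination \emph{or} the almost-sure safety, depending on how one phrases it. The cleanest route: the FLP construction builds an infinite schedule $S^\star$ as a limit of finite prefixes $S_1 \prec S_2 \prec \cdots$, each obtained by appending a single delivery event. For each prefix $S_k$ of length $\ell_k$, the probability that the random scheduler produces exactly $S_k$ is at least $\Cnf^{\ell_k} > 0$. So each prefix has positive probability; since the protocol never decides along $S^\star$, \emph{every} extension of \emph{every} $S_k$ into a complete schedule is one along which some correct process has not decided by step $\ell_k$. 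But deterministic termination says every correct process decides after finitely many steps in \emph{every} execution — in particular in the execution with schedule $S^\star$ itself (which is a legitimate \model schedule because we delivered all messages). This is a direct contradiction, and we never even need the almost-sure safety hypothesis except to obtain the bivalent starting configuration in step~(i). (If one instead wants the contradiction to land on safety rather than termination, one truncates $S^\star$ at a step where the protocol is forced to decide and shows the two univalent-but-conflicting extensions each have positive probability, contradicting almost-sure agreement.)

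The main obstacle I anticipate is bookkeeping around ``good'' versus arbitrary schedules: one must make sure that the bivalence and commuting-events arguments only ever invoke finite schedule prefixes (which always have positive probability in the \model) and never secretly rely on an adversary choosing among infinitely many continuations at once. Equivalently, one must check that the FLP fairness requirement — eventually deliver every pending message — is compatible with staying bivalent forever, which is already handled in the original proof and transfers verbatim since it is a statement about the existence of a single schedule, not about adversarial power. Once that is in place, the reduction to a contradiction with deterministic termination is immediate from assumption~(\ref{ass:lb}).
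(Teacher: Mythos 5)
Your overall strategy (run FLP inside the \model, using the fact that every finite schedule prefix has probability at least $\Cnf^{|\pi|}>0$) is the same one the paper uses, but your \emph{main} line of contradiction has a genuine gap. You claim that along the constructed infinite bivalent schedule $S^\star$ no process ever decides, so deterministic termination is violated, and that the almost-sure safety hypothesis is only needed to obtain the initial bivalent configuration. That last claim is wrong, and it is exactly where the argument can break: in classical FLP, ``no process has decided in a bivalent configuration'' is a consequence of \emph{deterministic} agreement (a configuration containing a decision is automatically univalent). Here agreement and validity hold only \as, so nothing a priori prevents the protocol from deciding inside a bivalent configuration; such a protocol would satisfy deterministic termination along $S^\star$, and your ``direct contradiction with termination'' evaporates. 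To rule this out you must argue: if a process decides $v$ at the end of a finite, positive-probability prefix whose configuration is bivalent, then there is a further \emph{finite} extension reaching a decision of $1-v$, the concatenated schedule still has positive probability, and hence agreement (or validity) fails with non-negligible probability---contradicting the \as hypothesis. In other words, the \as safety assumption is needed precisely at the step you tried to discard, and the same caveat applies wherever the FLP valence machinery uses ``decided $\Rightarrow$ univalent'' (e.g., inside the commuting-events lemma), not only for initial bivalence; your observation that safety effectively holds on all positive-probability finite prefixes is the right repair, but it must be invoked throughout.

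Once that repair is made, your argument collapses into the paper's proof, which is shorter and lands the contradiction on safety rather than termination: take the infinite bivalent execution $E$ given by FLP; deterministic termination forces $\protocol$ to decide within some finite prefix $\pi$ of $E$, while the state is still bivalent; since $\pi$ is finite it has probability at least $\Cnf^{|\pi|}>0$, so the bad schedules for validity or agreement have non-zero probability, contradicting the \as guarantee. Your parenthetical ``truncate $S^\star$ and exhibit two conflicting positive-probability extensions'' is exactly this argument, so I would promote it from a footnote to the main proof and drop the claim that the contradiction can be obtained from termination alone.
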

\begin{proof}
    Assume such a protocol $\protocol$ exists. Since $\protocol$ ensures deterministic termination, it must terminate in finite time in every execution. Furthermore, $\protocol$ ensures strong validity and agreement almost surely, so the probability of schedules in which $\protocol$ terminates in a bivalent state (in the sense of FLP~\cite{FLP}) must be $0$. Yet, following the FLP proof, there exists an infinite bivalent execution $E$.  It follows that $\protocol$ must decide in a finite prefix $\pi$ of $E$, when the state is still bivalent. Since $\pi$ has finite length, its schedule has non-zero probability. We have shown that $\protocol$ does not ensure agreement and validity almost surely, a contradiction.
\end{proof}

\begin{proof}[Proof of \Cref{thm:f+2}]
    Assume towards a contradiction that such a protocol $\protocol$ exists. Call an execution $E$ of $\protocol$ \textit{heterogeneous} if in $E$ at least 2 processes propose $0$ and at least 2 processes propose $1$. Assume \textit{wlog} that $p_1$ and $p_2$ propose $0$, while $p_{n-1}$ and $p_n$ propose $1$.

    Let $E$ be a heterogeneous execution in which all processes are correct. Let $S$ be the schedule of $E$. Let $\pi$ be shortest prefix of $S$ in which all processes have decided; in other words, we truncate any steps in $S$ after the processes have decided. The prefix $\pi$ must be finite since $\protocol$ ensures deterministic termination.

    We now describe three executions $E_1$, $E_2$, and $E_3$, with the same schedule $\pi$, and show that $\protocol$ must break either strong validity or agreement in one of the executions. This is sufficient to show that $\pi$ is bad (in the sense of \Cref{sec:model}) for strong validity or agreement.

    Let $E_1$ be an execution with schedule $\pi$, in which all processes behave identically to $E$; in $E_1$, $p_1$ and $p_2$ are correct, while $p_3,\ldots,p_n$ are Byzantine but behave correctly. By strong validity, $p_1$ must decide $0$ in $E_1$.

    Let $E_2$ be an execution with schedule $\pi$, in which again all processes behave identically to $E$; this time, $p_1$ and $p_n$ are correct, while $p_2,\ldots,p_{n-1}$ are Byzantine. Since $E_1$ and $E_2$ are indistinguishable to $p_1$, and $p_1$ has deterministic logic, $p_1$ must decide the same value in both executions, namely $0$. Thus, in order to satisfy agreement, $p_n$ must also decide $0$ in $E_2$. 

    Let $E_3$ be an execution with schedule $\pi$, in which again all processes behave as in $E$; this time, $p_{n-1}$ and $p_n$ are correct, while $p_1,\ldots,p_{n-2}$ are Byzantine. Since $E_2$ and $E_3$ are indistinguishable to $p_n$, and $p_n$ has deterministic logic, $p_n$ must decide the same value in both executions, namely $0$. But this breaks strong validity, as both correct processes ($p_{n-1}$ and $p_n$) have proposed $1$ in $E_3$.

    We have shown that $\pi$ is bad for agreement or strong validity. Since the length of $\pi$ is fixed (and finite) for fixed $n$ and $f$, the probability of $\pi$ is non-negligible according to our definition in \Cref{sec:model}. Therefore, $\protocol$ does not ensure strong validity and agreement with high probability, a contradiction.
\end{proof}

\section{The \MoDeL vs Standard Models}\label{sec:other-models}

In this section we compare the \model with the standard network models---asynchrony, synchrony, and partial synchrony---in terms of task solvability. We show that it is
\textbf{(i) strictly stronger} than full asynchrony (i.e., the set of tasks solvable in  \ra is a strict superset of the set of tasks solvable in asynchrony),
\textbf{(ii) strictly weaker} than synchrony, and
\textbf{(iii) incomparable} with partial synchrony, 
with respect to task solvability.  
Table~\ref{tab:comparisons} summarizes the relationships.

\begin{table}[h]
  \centering
  \caption{Relative power of the \model.}
  \begin{tabular}{lccc}
    \toprule
                     & Asynchrony & Partial synchrony & Synchrony\\
    \midrule
    \Ra & strictly stronger & incomparable & strictly weaker\\
    \bottomrule
  \end{tabular}
  \label{tab:comparisons}
\end{table}




\para{Asynchrony}
Every task solvable in asynchrony is also solvable in \ra because \emph{every finite prefix of an asynchronous schedule occurs with non-zero probability} under our scheduler.\footnote{Formally, let $S$ be any (possibly infinite) asynchronous schedule and $\pi$ a finite prefix of $S$. Since each delivery step is chosen independently with lower-bound probability $C(n,f)\!>\!0$, the probability of $\pi$ is at least $C(n,f)^{|\pi|}$.}
Conversely, the task ``Byzantine binary consensus with $n=2f+1$, deterministic termination, and safety holding with high probability’’ is solvable in the \model
(Section~\ref{sec:2f+1}) yet impossible in pure asynchrony by the standard network partition argument~\cite{Ben-Or83,BrachaT85}. Hence the \model is strictly stronger.

\para{Synchrony}
The \model is strictly more restrictive than synchrony: any task that is solvable in the \model is also solvable in synchrony, and there are tasks which are solvable in synchrony but not in the \model. As an example of the latter, consider deterministic crash-fault tolerant consensus: this task is impossible in the \model (\Cref{thm:flp}) but it is solvable in synchrony.

We now prove the former direction: any task solvable in the \model is also solvable in synchrony. Take any task $\tau$ and let $\protocol_{RA}$ be a protocol that solves $\tau$ in the \model. We can simulate $\protocol_{RA}$ in synchrony using an adapter $\Sigma_p$ for each process $p$ between $\protocol_{RA}$ and the network, in the following way:
\begin{enumerate}
    \item At each synchronous time step $t=0,1,\ldots$, each correct process $p$ executes all outstanding local steps of $\protocol_{RA}$, including those triggered by messages delivered since the previous time step. Process $p$ sends the same messages it would in $\protocol_{RA}$, by handing them to $\Sigma_p$. 
    \item The synchronous network delivers all messages send at time $t$ by time $t+1$. $\Sigma_p$ buffers any messages received by $p$ between $t$ and $t+1$.
    \item At time $t+1$, $\Sigma_p$ draws a fresh random permutation of the links on which $p$ has received messages, then delivers to $\protocol_{RA}$ the \emph{earliest} pending message on each link in that order.
\end{enumerate}

The resulting schedule visible to $\protocol_{RA}$ is valid in the \model, as any causal relation is preserved. Since the $\Sigma_p$ adapters collectively deliver messages from at most $n^2$ links at each synchronous time step, each link has a probability of at least $1/n^2 := \Cnf$ of being selected at each time step internal to the random asynchronous protocol $\protocol_{RA}$. Thus, the $\Sigma_p$ adapters preserve the probabilistic delivery properties of the \model. We have described a construction that simulates any random asynchronous algorithm in synchrony, thus showing that any task that is solvable in the \model is solvable in synchrony. 

\para{Partial synchrony}
The \model is incomparable with respect to partial synchrony in terms of solvability: there are tasks that are solvable in the \model, but not in partial synchrony, and vice-versa.

As an example of a task that is solvable in the \model but not in partial synchrony, recall the same $n=2f+1$ Byzantine consensus task above. \Cref{sec:2f+1} shows that this task is solvable in the \model. However, this task is not solvable in partial synchrony~\cite{dwork1988consensus}.

To show that some tasks are solvable in partial synchrony but not in the \model, consider the task of (deterministic) crash-fault tolerant consensus. This task is solvable in partial synchrony if $n\geq2f+1$~\cite{book}, but is not solvable in the \model if at least one process can crash (\Cref{thm:flp}). 

\para{Discussion} Although the random–asynchronous model is \emph{strictly} weaker than full synchrony in terms of task solvability, it can approximate synchronous behavior with high probability.  After \(R\) rounds of all-to-all communication, every pair of correct processes has exchanged at least one message with probability \(1-\exp(-\Omega(R))\) (\Cref{lem:all-hear}).  This mirrors the deterministic guarantee offered by a single synchronous round---modulo the known time bound~\(\Delta\).

Leveraging this property, we can implement, in the \model, a crash-fault detector $\sim\!\mathcal{P}$ that is \emph{perfect with high probability}:
\begin{description}
    \item[Strong completeness] Every process that crashes is eventually permanently suspected by every correct process.
    \item[Strong accuracy \whp] No correct process is suspected by any correct process \whp
\end{description}

The implementation is a classic heartbeat scheme. 
Each process broadcasts a heartbeat message once per round; process~\(p\) suspects~\(q\) iff it has not received a heartbeat from~\(q\) in the last \(R\) rounds. As we increase $R$, the probability of false positive suspicion decreases exponentially, through an argument similar to \Cref{lem:all-hear}. 

Such a failure detector is similar to the perfect $\mathcal{P}$ and eventually perfect $\diamond\mathcal{P}$ failure detectors, which can be implemented in synchrony and partial synchrony respectively~\cite{book}. 
Whether protocols that rely on \(\mathcal{P}\) or \(\Diamond\mathcal{P}\)
can be systematically translated to use
$\sim\!\mathcal{P}$---preserving their guarantees \whp---remains an open question. 

Intuitively, such a systematic translation seems unlikely to exist, as downgrading deterministic communication guarantees to hold \whp could break arbitrary internal algorithm invariants. 
Consider a synchronous algorithm that runs for $T$ rounds; such an algorithm is able to rely on the fact that it will receive messages from the same set of at least $n-f$ correct processes in each of the $T$ rounds---e.g., at the end of round $T$ each process $p$ may output the $(n-f)^{\text{th}}$ highest process id $p$ heard from in every one of the $T$ rounds. However, if we directly translate this algorithm to random asynchrony, by replacing each synchronous round with $R$ asynchronous rounds (for at total of $RT$ rounds), the guarantee of receiving messages from a common subset of $n-f$ processes holds merely with high probability, not deterministically. In the rare executions where $p$ hears from a common subset of fewer than \(n-f\) processes, the translated algorithm would try to access data that does not exist, violating its invariants. Any translation from synchrony to \ra would therefore need manual, algorithm-specific fallback logic to cope with those low-probability outliers, making a generic translation unlikely.


%

\section{Related Work}

\para{Random scheduling}
Similar assumptions to the \model have been explored by previous work. In message passing, Bracha and Toueg~\cite{BrachaT85} define the fair scheduler, which ensures that in each message round, there is a non-zero constant probability that every correct process receives messages from the same set of correct processes. Under this scheduler, they proposed deterministic asynchronous binary consensus protocols for crash and Byzantine fault models. 
More recently, Tusk~\cite{narwhal} and Mahi-Mahi~\cite{mahimahi} employ a form of random scheduling. Their random scheduler can be seen as a special case of ours: they only consider the $n=3f+1$ case and assume a standard round-based model with the subset of processes that a process ``hears from'' in a given round chosen \textit{uniformly} at random among all possibilities. They leverage the random scheduler to increase the probability to commit at each round, and thus to reduce latency, whereas our paper focuses on circumventing impossibility results in standard asynchrony, at different ratios of fault-tolerance. They conduct experiments without randomization on a wide-area network, without observing loss of liveness, which can serve as motivation for our work.

In shared memory, Aspnes~\cite{Aspnes02} defines noisy scheduling, in which the adversary may choose the schedule, but that adversarial schedule is perturbed randomly. Under this assumption, deterministic asynchronous consensus becomes achievable. Also in shared memory, previous work introduce a \textit{stochastic scheduler}~\cite{AlistarhSV15,AlistarhCS16}, which schedules shared memory steps randomly. This line of work shows that many lock-free algorithms are essentially wait-free when run against a stochastic scheduler, because the schedules that would break wait-freedom have negligible probability.

\para{Randomized consensus}
A large body of research leverages random coins to circumvent the FLP impossibility result~\cite{FLP}, which states that deterministic consensus is impossible in crash-prone asynchronous systems.
In this approach, protocols relax deterministic termination to probabilistic termination, assuming processes have access to a source of (cryptographically-secure) randomness that cannot be predicted by the adversary. Randomized consensus protocols employ either local coins~\cite{Ben-Or83, ritas, waterbear}---which produce randomness independently and locally at each process, without coordination with other processes---or common coins~\cite{Rabin83, oracles-constantinople, signature-free, narwhal, mahimahi, bullshark}---which, through the use of coordination and strong cryptographic primitives, ensure that all correct processes receive the same random output with some probability.

Our algorithms for the $n=3f+1$ setting resemble existing coin-based random consensus protocols, with the randomness moved from the process logic to the schedule. In fact, all of the coin-based consensus protocols we examined can be transformed, with minor changes, into deterministic (coin-less) algorithms in the \model. A natural question, then, is whether our model is equivalent to the standard asynchronous model with coin tosses. It is not: in the standard model, achieving safety \textit{\whp} when $n < 3f$ is impossible, even if processes have access to randomness. This is because of a standard network partition argument: the adversary can partition correct processes into two sets that never exchange messages, allowing Byzantine to force different decisions in each set. By contrast, our model makes long-lived network partitions occur with negligible probability, allowing safety \textit{\whp} even for $n<3f$.


\para{Probabilistic quorum systems}
A line of work on probabilistic quorum systems~\cite{prob-quorums, Yu06} relaxes quorum  intersection to be probabilistic rather than deterministic, and allows for probabilistic correctness guarantees. However, they are vulnerable to an adversarial scheduler~\cite{AiyerAB05}. ProBFT~\cite{probft} addresses this through the use of verifiable random functions for quorum selection. These works are similar to ours in that correctness is probabilistic rather than deterministic, but their approach focuses on the $n=3f+1$ setting, and is mainly aimed at scalability and efficiency (e.g., communication complexity), whereas we aim to circumvent impossibilities in standard asynchrony across various fault tolerance ratios.

\section{Conclusion}

We introduce the random asynchronous model, a novel relaxation of the classic asynchronous model that replaces adversarial message scheduling with a randomized scheduler. By eliminating the adversary’s ability to indefinitely delay messages, our model circumvents traditional impossibility results in asynchronous Byzantine consensus while preserving unbounded message delays and tolerating Byzantine faults. Our approach avoids the need for synchronized periods (as in partial synchrony) or cryptographic randomness (as in randomized consensus), offering a foundation for practical alternatives to existing asynchronous systems. 
We demonstrated that this relaxation enables new feasibility results across different resilience thresholds: deterministic safety and probabilistic termination for $n=3f+1$, deterministic termination with safety holding with high probability (\whp) for $n=2f+1$, and weak validity with \whp agreement for $n=f+2$. These results are complemented by impossibility bounds, showing our protocols achieve near-optimal guarantees under the model. 

Future work could explore extensions of this model to other distributed computing problems, such as state machine replication, and investigate empirical performance trade-offs in real-world deployments. By bridging the gap between theoretical impossibility and practical assumptions, we believe our model opens avenues for more efficient and resilient distributed protocols.

\clearpage\newpage
\appendix
\section*{APPENDIX}
\section{Challenges}\label{sec:challenges}

\subsection{Modeling Challenge}\label{sec:model-challenge}
Our aim is to propose a model for asynchrony without adversarial scheduling that is (1) general, i.e., does not restrict algorithm design (2) easy to work with for proofs, (3) usable by practical algorithms, and (4) intuitive.
In the course of defining the current model, we came up with several other possibilities that do not meet the aims above:
\begin{enumerate}
    \item A round-based model, similar to the fair scheduler model of Bracha and Toueg~\cite{BrachaT85}. In each communication round, a correct process sends a message to every process and waits to hear back from $n-f$ processes. The random scheduler assumption is: in each round, a correct process has a non-trivial (i.e., lower-bounded by a constant) probability of hearing from any subset of $n-f$ processes. This model has the advantage of being easy to work with, but is too restrictive, as it restricts algorithms to the round-based structure.
    \item A model which places probability directly on entire schedules, instead of on individual communication steps: each valid schedule has a non-trivial probability of occurring. We found this model to be un-intuitive and difficult to work with.
    \item A model in which the next message to be delivered is drawn, according to some distribution, from all currently pending messages (i.e., messages that have been sent but not yet delivered). The distribution must ensure that every message has a non-trivial probability to be scheduled next. This model is general (does not restrict algorithm structure), intuitive, and easy to work with, but has the following crucial flaw. Byzatine processes can skew the scheduling distribution by producing a large number of messages (potentially under the guise of retransmitting them as part of the reliable links assumption). If, at any given time, most pending messages are from Byzantine processes, then these messages are more likely to be delivered first, effectively reverting the model to an adversarial scheduler.
    \item Similarly to the previous proposal: at each scheduling step, a sender-receiver pair $(s,r)$ is drawn \textit{uniformly at random}, and the earliest pending message from $s$ to $r$ is the next message delivered in the system. This model is intuitive, and easy to work with, while also fixing the message injection attack by Byzantine processes: the number of pending messages from a (potentially Byzantine) process $s$ to process $r$ does not influence the probability distribution of $s$'s messages to be delivered before other messages. The only drawback is with respect to generality: the uniform distribution on the sender-receiver pairs is a strong assumption.
\end{enumerate}

Our final model is similar to the last proposal above, while solving the generality problem by removing the uniform distribution assumption. Instead, we simply assume that the probability of each sender-receiver pair being drawn is non-negligible.

\subsection{Algorithmic Challenge}\label{sec:algo-challenge}
Take the following naive (and incorrect) binary consensus algorithm for the $n\leq2f+1$ cases ($n=2f+1$ and/or $n=f+2$):
\begin{itemize}
    \item Round $0$: Processes initially sign and send their input value to all other processes, and wait for such messages from $n-f$ processes.
    \item Rounds $1$--$R$ (where $R$ is a parameter): Processes sign and send their entire history of sent and received messages to all processes and wait for valid such messages from $n-f$ processes.
    \item At the end of round $R$, correct processes decide on, say, the lowest input value they have received.
\end{itemize}

This algorithm is subject to the following attack:
\begin{itemize}
    \item Assume all correct processes have $1$ as their input value.
    \item The $f$ Byzantine processes do not send any messages to the $n-f$ correct processes up until and including round $R-1$. Otherwise, Byzantine processes act as correct processes whose input values are $0$, including accepting messages from correct processes.
    \item Thus, no correct process is aware of $0$ as a valid input value before round $R$.
    \item Let $p$ be some correct processes that the Byzantine processes have agreed on before the start of the execution. The attack attempts to cause $p$ to decide on a different value than the other correct processes, breaking agreement.
    \item In round $R$, Byzantine processes send correctly constructed messages to $p$, containing their entire communication histories. If the random scheduler delivers even one of these messages to $p$ before the end of the round, then $p$ becomes aware of the input value $0$ for the first time in round $R$ (and therefore does not have time to inform the other correct processes).
    \item After round $R$, $p$ must decide $0$ (being the lowest input value it is aware of), while the other processes must decide $1$ (not being aware of $0$ as a valid input value), breaking agreement.
\end{itemize}

This attack succeeds if the random scheduler delivers a message from a Byzantine process to $p$ in round $R$, among the first $n-f$ messages delivered to $p$ in that round. This has a non-trivial chance of occurring.

This algorithm illustrates the main challenge of designing correct algorithms under the \model when $n\leq 2f+1$: even if the model ensures that all correct processes communicate with each other eventually, Byzantine processes can still equivocate and correct processes do not necessarily know which messages are from correct processes and which are not.
\section{Proofs for Deterministic Safety Algorithms}\label{sec:proofs-det}

In this section we prove the correctness of our algorithm in \Cref{sec:warmup}.
We first prove that the \textsc{Round} procedure in \Cref{alg:aac-byz} satisfies the properties below, and then prove that \Cref{alg:skeleton} solves consensus under Byzantine faults.
\begin{description}
    \item[Strong Validity] If all correct processes propose the same value $v$ and a correct process returns a pair $\langle \textsc{Grade}, v' \rangle$, then $\textsc{Grade} = \textsc{Commit}$ and $v' = v$.
    \item[Consistency] If any correct process returns  $\langle \textsc{Commit}, v \rangle$, then no correct process returns $(\cdot, v' \ne v)$.
    \item[Termination] If all correct processes propose, then every correct process eventually returns.
\end{description}

In our proofs we rely on the following properties of Byzantine Reliable Broadcast (BRB)~\cite{book}:
\begin{description}
    \item[BRB-Validity] If a correct process $p$ broadcasts a message $m$, then every correct process eventually delivers $m$.
    \item[BRB-No-duplication] Every correct process delivers at most one message.
    \item[BRB-Integrity] If some correct process delivers a message $m$ with sender $p$ and process $p$ is correct, then $m$ was previously broadcast by $p$.
    \item[BRB-Consistency] If some correct process delivers a message $m$ and another correct process delivers a message $m'$, then $m = m'$.
    \item[BRB-Totality] If some message is delivered by any correct process, every correct process eventually delivers a message.
\end{description}

\begin{lemma}\label{lem:byz-round-validity}
    With Byzantine faults and $n=3f+1$, \Cref{alg:aac-byz} satisfies strong validity.
\end{lemma}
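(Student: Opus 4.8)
The plan is to follow the value $v$ through the two phases of the \textsc{Round} procedure: I will show that the phase-1 $proposal$ of every correct process is forced to be $v$, that every \emph{valid} \textsc{Echo} message (from anyone) is forced to carry $v$, and that these two facts together push every correct process into the \textsc{Commit} branch on line~\ref{line:fac-byz-commit} with $v^\ast = v$.

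First I would fix the round $r$ and assume every correct process invokes $\textsc{Round}(r,v)$ with the same $v$; recall $n-f = 2f+1$ and at most $f$ processes are Byzantine, so at least $2f+1$ processes are correct and, in round $r$, each correct process signs and BRB-broadcasts only $\langle\textsc{Init}, r, v\rangle$. By BRB-Validity every such message is eventually delivered, and by BRB-Integrity a correct process's delivered \textsc{Init} carries exactly the value it broadcast. Hence in the set $\mathcal{H}$ of $n-f = 2f+1$ \textsc{Init} messages that a correct process $p$ collects on line~\ref{line:fac-byz-wait-init}, at least $f+1$ come from correct senders and carry $v$, while at most $f$ carry any other value; since $f+1 > (2f+1)/2$, the majority value of $\mathcal{H}$ is unambiguously $v$ and $p$ sets $proposal \gets v$ on line~\ref{line:fac-byz-adopt-phase1}.

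Next I would make precise what a \emph{valid} \textsc{Echo} message is: it carries a set $\mathcal{H}'$ of at least $n-f$ correctly-signed $\langle\textsc{Init}, r, \cdot\rangle$ messages from distinct senders, together with the claim that its proposal equals the majority value of $\mathcal{H}'$. Because signatures are unforgeable, the only entries of $\mathcal{H}'$ whose value differs from $v$ are those signed by the (at most $f$) Byzantine processes; as $|\mathcal{H}'| \ge 2f+1$, at least $f+1$ of its entries are correct processes' round-$r$ \textsc{Init} messages carrying $v$, so the majority of $\mathcal{H}'$ is $v$ and a valid \textsc{Echo} must claim value $v$. Consequently, the $n-f = 2f+1$ valid \textsc{Echo} messages that a correct process $p$ waits for on line~\ref{line:fac-byz-wait-echo} all carry $v$, so $p$ has received at least $2f+1$ messages of the form $\langle\textsc{Echo}, r, v, \cdot\rangle$; the test on line~\ref{line:fac-byz-commit} then succeeds with $v^\ast = v$, and $p$ returns $\langle\textsc{Commit}, v\rangle$, which is exactly the conclusion required by strong validity.

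I expect the only delicate point to be pinning down the definition of a ``valid'' \textsc{Echo} message and checking that the majority count still holds when $|\mathcal{H}'|$ exceeds $2f+1$ (it does: with $|\mathcal{H}'| = 2f+1+k$ at least $f+1+k$ entries are correct-signed copies of $v$ against at most $f$ others, still a strict majority). Everything else is bookkeeping over the BRB guarantees and the at-most-$f$ bound on Byzantine senders.
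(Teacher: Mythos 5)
Your proof is correct and follows essentially the same route as the paper's: majority counting over the $2f+1$ delivered \textsc{Init} messages forces every correct process's phase-2 proposal to be $v$, and the impossibility of assembling a valid justification set with majority $\ne v$ (at most $f$ non-$v$ \textsc{Init}s exist) forces every valid \textsc{Echo} to carry $v$, so all correct processes commit $v$ at line~\ref{line:fac-byz-commit}. The only difference is presentational—you spell out the quorum arithmetic and the validity predicate for \textsc{Echo} messages that the paper leaves implicit.
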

\begin{proof}
    If all correct processes propose the same value $v$, then at least $2f+1$ processes BRB-broadcast an \textsc{Init} message for $v$, and therefore at most $f$ processes BRB-broadcast an \textsc{Init} message for $1-v$. Thus $v$ will be the majority value among all \textsc{Init} messages delivered in phase 1, at all correct processes. Thus all correct processes will BRB-broadcast an \textsc{Echo} message for $v$. Furthermore, no Byzantine process can produce a valid \textsc{Echo} message for $1-v$, since to do so would require a set of $2f+1$ \textsc{Init} message with a majority value of $1-v$. This is impossible due to the properties of BRB and the fact that at most $f$ processes have BRB-broadcast an \textsc{Init} message for $1-v$.
    So, all valid \textsc{Echo} messages received by correct processes will be for $v$, so all correct processes will commit $v$ at line~\ref{line:fac-byz-commit}.
\end{proof}

\begin{lemma}
    With Byzantine faults and $n=3f+1$, \Cref{alg:aac-byz} satisfies consistency.
\end{lemma}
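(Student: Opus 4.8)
The plan is to run a quorum-intersection argument at the level of \textsc{Echo} messages, leaning on BRB to make every sender's \textsc{Echo} globally unambiguous; this is the standard ``a \textsc{Commit} grade forces everyone to return the committed value'' step of a graded-consensus / adopt--commit object, transposed to the BRB setting. First I would fix a round $r$ --- distinct rounds cannot interfere, since every \textsc{Echo} message carries its round number --- and assume a correct process $p$ returns $\langle\textsc{Commit}, v\rangle$ from \textsc{Round}$(r,\cdot)$. By the commit rule (line~\ref{line:fac-byz-commit}), $p$ must have BRB-delivered valid $\langle\textsc{Echo}, r, v, \_\rangle$ messages from a set $W$ of at least $2f+1$ distinct senders; it is at least $2f+1$ distinct senders because, by BRB-No-duplication, $p$ holds at most one \textsc{Echo} per sender. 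The key lemma I would establish next is that $W$ behaves like a ``committed'' quorum \emph{globally}: by BRB-Consistency, for every $s\in W$ and every correct process $q$, if $q$ BRB-delivers any \textsc{Echo} from $s$ then it delivers the very same message $p$ did, i.e.\ a valid \textsc{Echo} for $v$. Crucially, this holds even when $s$ is Byzantine, which is why no reasoning about the \textsc{Init}-level justifications carried inside \textsc{Echo} messages is needed here.

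Then I would take an arbitrary correct process $q$ that returns $\langle g', v'\rangle$ from \textsc{Round}$(r,\cdot)$ and show $v'=v$. Let $\mathcal{E}$ be the set of valid round-$r$ \textsc{Echo} messages $q$ holds when it reaches line~\ref{line:fac-byz-commit}; since $q$ waited for $n-f = 2f+1$ of them (line~\ref{line:fac-byz-wait-echo}) and keeps at most one per sender, $2f+1 \le |\mathcal{E}| \le n = 3f+1$. At most $n - |W| \le f$ messages of $\mathcal{E}$ come from senders outside $W$, whereas every message of $\mathcal{E}$ whose sender lies in $W$ is a valid \textsc{Echo} for $v$ by the previous step. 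Hence $\mathcal{E}$ contains at most $f$ \textsc{Echo} messages for the value other than $v$, and at least $|\mathcal{E}| - f \ge f+1$ \textsc{Echo} messages for $v$. These two bounds close the argument: $q$ cannot fire the commit rule for $v' \ne v$ (that would require $\ge 2f+1 > f$ \textsc{Echo} messages for $v'$ in $\mathcal{E}$), and if $q$ instead reaches line~\ref{line:fac-byz-adopt} the majority value of $\mathcal{E}$ is $v$ (at least $f+1$ for $v$ versus at most $f$ for the other value --- a strict majority, so no tie-breaking matters). Either way $q$ returns $v$, so $v' = v$.

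The step I expect to be the main obstacle is the Byzantine case of the ``one globally-fixed \textsc{Echo} per sender'' claim --- in particular the fact that $q$ may stop waiting before it ever delivers the \textsc{Echo} messages of all of $W$, and that Byzantine senders in $W$ may never deliver anything to $q$ within its wait window. The way to get around this is to never claim that $q$ reproduces $p$'s quorum; I would only use the counting bound that at least $|W| + |\mathcal{E}| - n \ge f+1$ of the messages in $\mathcal{E}$ originate from senders in $W$, together with BRB-Consistency (forcing their content to be the valid \textsc{Echo} for $v$) and BRB-No-duplication (no double counting). A secondary point to state carefully is what ``valid'' means in line~\ref{line:fac-byz-wait-echo} and in the commit check --- at most one \textsc{Echo} per sender, with a value field pinned down by unforgeable signatures --- so that the pure counting above is legitimate without ever invoking \textsc{Init}-level quorums.
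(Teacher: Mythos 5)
Your proof is correct and follows essentially the same route as the paper's: a quorum-intersection count on the $2f+1$ \textsc{Echo} messages behind the commit, combined with BRB-Consistency to pin the content of the common senders' messages to $v$, yielding at least $f+1$ \textsc{Echo}s for $v$ versus at most $f$ for any other value at every correct process. Your version is just slightly more explicit about the edge cases (Byzantine senders in $W$, possibly more than $2f+1$ delivered \textsc{Echo}s, one-message-per-sender via BRB-No-duplication), but the underlying argument is the same.
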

\begin{proof}
    If a correct process $p_1$ commits $v$ at line~\ref{line:fac-byz-commit}, then it must have delivered a set $S_1$ of $2f+1$ \textsc{Echo} messages for $v$ at line~\ref{line:fac-byz-wait-echo}. Take now another process $p_2$ and consider the set $S_2$ of $2f+1$ \textsc{Echo} messages it delivers at line~\ref{line:fac-byz-wait-echo}. By quorum intersection, $S_1$ and $S_2$ must intersect in at least $f+1$ messages. By the BRB-Consistency property, these $f+1$ messages must be identical at $p_1$ and $p_2$. Thus $p_2$ delivers at least $f+1$ \textsc{Echo} messages for $v$, which constitutes a majority of the $2f+1$ \textsc{Echo} messages it delivers overall. So if $p_2$ commits a value at line~\ref{line:fac-byz-commit}, then it must commit $v$, and if $p_2$ adopts a value at line~\ref{line:fac-byz-adopt}, then it must adopt $v$.
\end{proof}

\begin{lemma}
    With Byzantine faults and $n=3f+1$, \Cref{alg:aac-byz} satisfies termination.
\end{lemma}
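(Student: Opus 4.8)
The plan is to reduce termination to the fact that the two \textbf{wait} statements in \Cref{alg:aac-byz} (lines~\ref{line:fac-byz-wait-init} and~\ref{line:fac-byz-wait-echo}) eventually unblock at every correct process; once they do, the remainder of the procedure is straight-line code and the process returns. Crucially, this argument uses only BRB-Validity together with the bound $n-f = 2f+1$ on the number of correct processes, so it does not depend on the random schedule (BRB's liveness properties carry over to the \model by the footnote after \Cref{alg:aac-byz} together with the no-loss property of the links).

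For Phase~1: every correct process, when invoked, unconditionally BRB-Broadcasts an $\langle \textsc{Init}, r, \cdot \rangle$ message at line~\ref{line:fac-byz-init}, and there are at least $n-f$ such processes. By BRB-Validity, every correct process eventually BRB-Delivers all of these $\geq n-f$ \textsc{Init} messages, so line~\ref{line:fac-byz-wait-init} unblocks everywhere. For Phase~2: since every correct process has passed Phase~1, every correct process reaches line~\ref{line:fac-byz-send-echo} and BRB-Broadcasts an \textsc{Echo} message; this message is \emph{valid} because it carries the very set $\mathcal{H}$ of $n-f$ signed \textsc{Init} messages the process delivered and sets $proposal$ to the majority value of $\mathcal{H}$ (line~\ref{line:fac-byz-adopt-phase1}), which is exactly the predicate checked at line~\ref{line:fac-byz-wait-echo}. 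Hence at least $n-f$ valid \textsc{Echo} messages are BRB-Broadcast by correct processes and, by BRB-Validity, eventually BRB-Delivered at every correct process, so line~\ref{line:fac-byz-wait-echo} unblocks everywhere; each correct process then executes lines~\ref{line:fac-byz-commit}--\ref{line:fac-byz-adopt} and returns.

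The one subtlety worth spelling out is the ordering dependency between the phases: the Phase~2 broadcasts occur only after the Phase~1 waits have completed, so one must check this does not create a circular wait. It does not, because Phase~1 completion depends solely on \textsc{Init} messages that are sent unconditionally at the very start of the procedure; the dependency chain therefore has length two and bottoms out. I expect this to be the only point that needs explicit justification—everything else is a direct application of BRB-Validity.
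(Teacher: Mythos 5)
Your proof is correct and follows the same route as the paper's: the paper also argues that the only blocking steps are the two waits for $n-f$ messages, and that these unblock by BRB-Validity combined with the existence of at least $n-f$ correct processes. Your write-up merely adds detail the paper leaves implicit (the validity of correct processes' \textsc{Echo} messages and the acyclicity of the phase dependency), which is fine but not a different argument.
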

\begin{proof}
    Follows immediately from the algorithm and from the properties of Byzantine Reliable Broadcast. Processes perform two phases; the only blocking step of each phase is waiting for $n-f$ messages (lines~\ref{line:fac-byz-wait-init} and~\ref{line:fac-byz-wait-echo}). This waiting eventually terminates, by the BRB-Validity property and the fact that there are at least $n-f$ correct processes.
\end{proof}

\begin{theorem}\label{thm:validity-byz}
    With Byzantine faults and $n=3f+1$, \Cref{alg:skeleton} satisfies strong validity.
\end{theorem}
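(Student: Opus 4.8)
The plan is to reduce strong validity of \Cref{alg:skeleton} to strong validity of the \textsc{Round} procedure (\Cref{lem:byz-round-validity}) via a short induction on the round number. The key invariant to establish is: \emph{if all correct processes propose the same value $v$, then at the start of every round $r$, every correct process $i$ has $est_i = v$.}

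First I would prove the invariant. The base case $r = 0$ is immediate, since $est_i$ is initialized to $v_i = v$ in the \textsc{Propose} procedure. For the inductive step, assume every correct process enters round $r$ with $est_i = v$. Then each correct process invokes $\textsc{Round}(r, v)$ with the same argument $v$, so all correct processes propose $v$ to this instance of \textsc{Round}. By \Cref{lem:byz-round-validity}, any correct process that returns from this instance returns $\langle \textsc{Commit}, v \rangle$; in particular it then executes $est_i \gets v$. Hence every correct process that reaches round $r+1$ does so with $est_i = v$, completing the induction.

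Next I would conclude. Whenever a correct process decides (line~\ref{line:skeleton-decide}), it decides the value $v'$ returned by the current invocation $\textsc{Round}(r_i, est_i)$. By the invariant, $est_i = v$ at the start of that round, so all correct processes proposed $v$ to that instance of \textsc{Round}, and by \Cref{lem:byz-round-validity} we get $v' = v$. Therefore every correct process that decides, decides $v$, which is precisely strong validity.

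I do not expect a genuine obstacle here, since the argument rests entirely on the already-proven strong validity of \textsc{Round}. The only point requiring slight care is that a correct process continues executing the \texttt{while} loop after it decides (the ``only once'' annotation merely prevents it from deciding a second time), so one must verify that the estimate-is-$v$ invariant persists across \emph{all} rounds, not just up to the first decision; this is exactly what the inductive step above delivers.
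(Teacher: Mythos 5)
Your proof is correct and follows essentially the same route as the paper: both reduce strong validity of \Cref{alg:skeleton} to the strong validity of \textsc{Round} (\Cref{lem:byz-round-validity}). The only difference is that you make the round-by-round invariant $est_i = v$ explicit via induction, whereas the paper simply notes that all correct processes already commit (and hence decide, once) $v$ in the first round; both arguments are sound.
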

\begin{proof}
    This follows from the strong validity property of the \textsc{Round} procedure (\Cref{lem:byz-round-validity}): if all correct processes propose $v$ to consensus, then all correct processes propose $v$ to \textsc{Round} in the first round, where by \Cref{lem:byz-round-validity}, all correct processes commit $v$, and thus all correct processes decide $v$ at line~\ref{line:skeleton-decide}.
\end{proof}

\begin{theorem}\label{thm:agreement-byz}
    With Byzantine faults and $n=3f+1$, \Cref{alg:skeleton} satisfies agreement.
\end{theorem}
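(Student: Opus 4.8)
The plan is to run the standard adopt--commit argument. First I would record a structural fact: every correct process participates in \emph{every} round $r \ge 0$. The \textbf{while} loop in \Cref{alg:skeleton} never exits, all correct processes start with $r_i = 0$, and they increment $r_i$ only after \textsc{Round} returns; so by induction on the round number and the Termination property of \textsc{Round} (whose hypothesis ``all correct processes propose'' is exactly the induction hypothesis at each round), every call \textsc{Round}$(r,\cdot)$ by a correct process eventually returns, and every correct process reaches round $r+1$.

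Next, let $r^*$ be the smallest round in which some correct process decides, and let $v^*$ be the value decided in that round (well defined since deciding happens ``only once'', and by minimality no correct process decides before round $r^*$). The deciding process returned $\langle\textsc{Commit}, v^*\rangle$ from \textsc{Round}$(r^*,\cdot)$. By the Consistency property of \textsc{Round}, every correct process that returns from round $r^*$ returns a pair whose value is $v^*$ (grade \textsc{Commit} or \textsc{Adopt}). Hence any correct process that decides in round $r^*$ decides $v^*$, and, crucially, \emph{every} correct process sets $est_i \gets v^*$ at the end of round $r^*$ --- whether or not it decided.

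Then I would show by induction on $r \ge r^*+1$ that all correct processes propose $v^*$ to \textsc{Round}$(r,\cdot)$: the base case $r = r^*+1$ is the previous paragraph, and for the step, the Strong Validity property of \textsc{Round} forces every correct process to return $\langle\textsc{Commit}, v^*\rangle$ in round $r$, so all of them again set $est_i \gets v^*$ and carry the invariant to round $r+1$. Consequently any correct process deciding in a round $r > r^*$ also decides $v^*$. Combined with the round-$r^*$ case and the fact that no correct process decides earlier, every correct process that decides, decides $v^*$, which gives agreement.

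The main obstacle is only bookkeeping: one must be careful that a process which has \emph{already} decided still executes the assignment $est_i \gets v$ and keeps proposing $v^*$ in later rounds, so the induction invariant holds for the whole system rather than just for the not-yet-decided processes; and one must note that processes executing a given round at different times causes no trouble, since Consistency and Strong Validity of \textsc{Round} are per-round statements that do not depend on the relative timing of the participants.
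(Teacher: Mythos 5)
Your proposal is correct and follows essentially the same route as the paper's proof: take the earliest round in which a correct process decides, use the Consistency property of \textsc{Round} to force all correct processes to carry that value, and then use Strong Validity to force a \textsc{Commit} of the same value in the next round. Your extra bookkeeping (all correct processes keep participating in every round, and the invariant persists beyond round $r^*+1$) just makes explicit details the paper leaves implicit.
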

\begin{proof}
    Let $r$ be the earliest round at which some process decides and let $p$ be a process that decides $v$ at round $r$. We will show that any other process $p'$ that decides, must decide $v$. 
    
    For $p$ to decide $v$ at round $r$, \textsc{Round} must output $(\textsc{Commit}, v)$ in that round. Thus, by the consistency property of \textsc{Round}, $\textsc{Round}(r,\cdot)$ must output $(\cdot, v)$ at all correct processes. If $\textsc{Round}(r,\cdot)$ outputs $(\textsc{Commit}, v)$ for $p'$, then $p'$ decides $v$ at round $r$ (line~\ref{line:skeleton-decide}). Otherwise, all correct processes input $v$ to $\textsc{Round}(r+1,\cdot)$, and by the strong validity property, all processes (including $p'$) will output $(\textsc{Commit}, v)$ and decide $v$ at round $r+1$.
\end{proof}

\begin{theorem}\label{thm:termination-byz}
    With Byzantine faults and $n=3f+1$, \Cref{alg:skeleton} satisfies termination.
\end{theorem}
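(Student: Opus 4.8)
The section heading promises termination \emph{almost surely}, so the target is: with probability $1$, every correct process eventually executes line~\ref{line:skeleton-decide}. I would build the proof on three facts. (1) The \textsc{Round} procedure always returns (its termination property, already established for \Cref{alg:aac-byz}), so a correct process that never decides must execute infinitely many rounds of \Cref{alg:skeleton}. (2) If all correct processes start some round $r$ with the same estimate $v$, then by strong validity of \textsc{Round} (\Cref{lem:byz-round-validity}) they all return $(\textsc{Commit},v)$ and decide in round $r$; and if some correct process decides in round $r'$, then (by the argument in \Cref{thm:agreement-byz}) all correct processes decide by round $r'+1$. (3) Randomness enters only through the schedule, and at each round there is a ``favorable'' schedule, of probability bounded below by a constant, that makes all correct processes adopt the same estimate.

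To make (3) precise, for each round $r$ let $\mathsf{Fav}_r$ be the event that every correct process computes the same majority value on line~\ref{line:fac-byz-adopt-phase1} in round $r$. I would argue two things about it. First, $\mathsf{Fav}_r$ forces agreement on the estimate: under $\mathsf{Fav}_r$ all correct processes BRB-broadcast an \textsc{Echo} for a common value $v^\star$ in phase~2, and since at most $f$ of the $n-f=2f+1$ valid \textsc{Echo} messages a correct process BRB-delivers can come from Byzantine senders (BRB-no-duplication and BRB-consistency give at most one delivered message per sender), at least $f+1$ of them carry $v^\star$ --- a strict majority --- so every correct process either commits $v^\star$ (line~\ref{line:fac-byz-commit}) or adopts $v^\star$ (line~\ref{line:fac-byz-adopt}), ending round $r$ with estimate $v^\star$. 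Combined with fact (2), if $\mathsf{Fav}_r$ occurs and no correct process has decided earlier, all correct processes decide by round $r+1$. Second, $\Pr[\mathsf{Fav}_r \mid \mathcal F] \ge \rho$ for some constant $\rho=\rho(n,f)>0$ and every admissible history $\mathcal F$ preceding round $r$: the favorable schedule is the one driving the \textsc{Init}-BRB instances of a fixed set of $2f+1$ correct processes to completion at every correct process while completing no other round-$r$ \textsc{Init}-BRB instance beforehand, so that by BRB-consistency every correct process BRB-delivers the same set $\mathcal H$ and picks the same majority; this schedule is a fixed sequence of at most $T(n,f)$ scheduling draws ($T(n,f)$ bounded by the message complexity of $2f+1$ BRB instances), each of conditional probability $\ge \Cnf$ by~(\ref{ass:lb}), so by~(\ref{ass:independence}) the whole sequence has probability $\ge \Cnf^{T(n,f)}=:\rho$.

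Assembling the pieces: ``no correct process has decided by the end of round $N$'' implies $\overline{\mathsf{Fav}_r}$ for every $r\le N-1$; since $\Pr[\mathsf{Fav}_r\mid\mathcal F]\ge\rho$ for all admissible $\mathcal F$, induction over rounds yields $\Pr[\overline{\mathsf{Fav}_1}\cap\cdots\cap\overline{\mathsf{Fav}_{N-1}}]\le(1-\rho)^{N-1}$. Letting $N\to\infty$ and using continuity of measure, the probability that no correct process ever decides is $0$, which by fact (2) means that every correct process decides almost surely --- i.e., the algorithm terminates almost surely. (This mirrors the role of a random coin in randomized Byzantine consensus, with the coin replaced by the random schedule.)

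I expect the main obstacle to be the second half of the penultimate paragraph, together with the fact that correct processes are not lock-stepped across rounds: a fast correct process can finish phase~1 of round $r$ as soon as $f+1$ correct processes have broadcast their round-$r$ \textsc{Init}, possibly before a slow one has entered round $r$, so the favorable window for round $r$ must be anchored at the right stopping time and the claim that the scheduler can still force a common $\mathcal H$ needs care. The companion technical point --- that the scheduler can complete the chosen $2f+1$ \textsc{Init}-BRB instances and suppress all others (including in-progress Byzantine-initiated ones) within a number of draws depending only on $n$ and $f$ --- amounts to unfolding the underlying BRB protocol. Once both are pinned down, the probabilistic bookkeeping above is routine.
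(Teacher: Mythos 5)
Your argument is essentially the paper's own proof: the paper likewise reduces termination to the existence, in every round and independently of the history, of a positive-probability schedule that makes all correct processes end the round with a common estimate (it packages this as a Markov chain on the number of correct processes with estimate $0$, whose absorbing states $0$ and $2f+1$ are reached in one \textsc{Round} invocation with non-zero probability), after which the strong validity of \textsc{Round} forces every correct process to decide. The only cosmetic differences are that the paper's favorable schedule also controls phase~2 so that all correct processes commit within that same round, whereas your $\mathsf{Fav}_r$ only forces common adoption and defers the decision to round $r+1$, and that the technical points you flag (anchoring the favorable window despite non-lock-stepped rounds, bounding the number of scheduling draws) are passed over just as briefly in the paper's proof.
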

\begin{proof}
     We can describe the execution of the protocol as a Markov chain with states $0,\ldots,n-f=2f+1$; the system is at state $i$ if $i$ correct processes have estimate ($est_i$ variable) equal to $0$ before invoking $\textsc{Round}$. Due to the strong validity property of the $\textsc{Round}$ procedure, states $0$ and $2f+1$ are absorbing states. There is a non-zero transition probability from each state (including $0$ and $2f+1$), to state $0$ or $2f+1$, or both (we show this below). Therefore, almost surely, the system will eventually reach one of the two absorbing states and remain there. Once this happens (i.e., once all processes have the same $est_i$ variable), the strong validity property of $\textsc{Round}$ ensures that all processes (who have not decided yet) will decide within a round.
    
    It only remains to show that there is a non-zero transition probability from each state to at least one of the absorbing states $0$ and $2f+1$. Consider a state $i \notin \{0,2f+1\}$; there is a schedule $S$ with non-zero probability which leads the system from $i$ to $0$ or $2f+1$ in one invocation of \textsc{Round}. We consider two cases:
    \begin{itemize}
        \item $i < f+1$: in this case $0$ is the minority value among correct processes. In schedule $S$, the $n-f$ \textsc{Init} messages delivered by correct processes at line~\ref{line:fac-byz-wait-init} are all from correct processes. Thus, every correct process sees $i$ $0$s and $2f+1-i$ $1$s; $1$ is the majority value, so all correct processes adopt it for phase 2. In phase 2, $S$ again ensures that the $n-f$ \textsc{Echo} messages delivered by correct processes at line~\ref{line:fac-byz-wait-echo} are all from correct processes. Thus, all correct processes see $2f+1$ \textsc{Echo} messages for $1$ and commit $1$, bringing the system to state $0$.
        \item $i \geq f+1$: in this case $0$ is the majority value among correct processes. This case is symmetrical with respect to the previous one: the only difference is that all correct processes adopt $0$ (the majority value) at the end of phase 1, and all correct processes deliver $2f+1$ \textsc{Echo} messages for $0$, thus committing $0$ and bringing the system to state $2f+1$.
    \end{itemize}
\end{proof}
\section{Crash-fault Tolerant Consensus in the \MoDeL}\label{sec:crash-fault}

\subsection{Definition}
In the crash-fault model, faulty processes may permanently stop participating in the protocol at any time, but otherwise follow the protocol.
Crash-fault tolerant consensus is defined by the following properties:
\begin{description}
    \item[Validity] If a process decides $v$, then $v$ was proposed by some process. 
    \item[Uniform Agreement] If processes $p$ and $q$ decide $v$ and $w$ respectively, then $v = w$.
    \item[Termination] Every correct process decides some value.
\end{description}

\subsection{Algorithm}

Our algorithm for crash-fault tolerant consensus uses the same round-based structure, shown in \Cref{alg:skeleton}, as our Byzantine consensus algorithm from \Cref{sec:warmup}. We use a different implementation of the \textsc{Round} procedure, shown in \Cref{alg:aac-crash}.

\begin{algorithm}
\caption{Crash-tolerant \textsc{Round} implementation: pseudocode at process $i$}
\label{alg:aac-crash}
\begin{algorithmic} [1]
\State \textbf{procedure} \textsc{Round}$(r,v)$:
\State \hskip1em Send $\langle \textsc{Init}, r, v \rangle$ to all processes\label{line:fac-crash-init}
\BlueComment{Phase 1}
\State \hskip1em Wait for $n-f$ $\langle \textsc{Init}, r,\_ \rangle$ messages\label{line:fac-crash-wait-init}
\State \hskip1em \textbf{if} $\exists\, v^*$ such that I received $\geq f+1$ $\langle \textsc{Init}, r, v^* \rangle$ messages:\label{line:fac-crash-if-init}
\State \hskip2em $proposal \gets v^*$\label{line:fac-crash-phase1-v} 
\State \hskip1em \textbf{else}:
\State \hskip2em $proposal \gets \bot$ \label{line:fac-crash-phase1-bot} 
\State \hskip1em Send $\langle \textsc{Echo}, r, proposal \rangle$ to all processes
\BlueComment{Phase 2}\label{line:fac-crash-send-echo}
\State \hskip1em Wait for $n-f$ $\langle \textsc{Echo}, r, \_ \rangle$ messages\label{line:fac-crash-wait-echo}
\State \hskip1em \textbf{if} $\exists\, v^* \ne \bot$ such that I received $\geq f+1$ $\langle \textsc{Echo}, r, v^* \rangle$ messages:
\State \hskip2em \textbf{return} $\langle \textsc{Commit}, v^* \rangle$\label{line:fac-crash-commit}
\State \hskip1em \textbf{else if} $\exists\, v^* \ne \bot$ such that I received $\geq 1$ $\langle \textsc{Echo}, r, v^* \rangle$ messages:
\State \hskip2em \textbf{return} $\langle \textsc{Adopt}, v^* \rangle$\label{line:fac-crash-adopt-det}
\State \hskip1em \textbf{else}:
\State \hskip2em $v^* \gets$ value in first $\langle \textsc{Init}, r,\_ \rangle$ message received\label{line:fac-crash-adopt-rand}
\State \hskip2em \textbf{return} $\langle \textsc{Adopt}, v^* \rangle$ 
\end{algorithmic}
\end{algorithm}

The \textsc{Round} algorithm consists of two
phases. In the first phase, every correct process proposes a value by sending it to
all processes (line~\ref{line:fac-crash-init}). Then it waits to receive proposals from
a quorum of processes (line~\ref{line:fac-crash-wait-init}). If a process observes that all responses contain the same
phase-one proposal value then it proposes that value for the second phase (line~\ref{line:fac-crash-phase1-v}). If a process does not obtain a unanimous set of proposals in the first phase, the process simply proposes $\bot$ for the second phase (line~\ref{line:fac-crash-phase1-bot}).

Note that as a result of this procedure, if two processes propose a value different from $\bot$ for the second phase, they propose exactly the same value.
Let this value be called $v^*$.

The purpose of the second phase is to verify if $v^*$ was also observed by enough other processes.
After a process receives $n-f$ phase-two messages (line~\ref{line:fac-byz-wait-echo}), it checks if more than $f$ phase-two proposals are equal to $v^*$, and if so commits $v^*$ (line~\ref{line:fac-crash-commit}). A process adopts $v^*$ if it receives $v^*$ in the second phase, but is unable to collect enough $v^*$ values to decide (line~\ref{line:fac-crash-adopt-det}). Finally, it is possible that a process does not receive $v^*$ in the second phase (either because no such value was found in phase one or simply because it has received only $\bot$ in phase two); in this case the process adopts the fist value it received in phase one (line~\ref{line:fac-crash-adopt-rand}). 

As in the Byzantine case, the main intuition is that at each round, a favorable schedule can help processes agree (i.e., adopt the same estimate), by causing them to adopt the same estimate at line~\ref{line:fac-crash-adopt-rand} (e.g., by ensuring that the first \textsc{Init} message they receive is from the same process). Since the schedule is random, it has a non-zero chance of being favorable at each round. Thus, almost surely, the schedule will eventually be favorable. 

\subsection{Proofs}
We begin with a few lemmas which establish that the \textsc{Round} procedure in \Cref{alg:aac-crash} satisfies these properties:
\begin{description}
    \item[Integrity] If a process returns $(\cdot, v)$, then $v$ was proposed by some process.
    \item[Strong Validity] If all correct processes propose the same value $v$ and a process returns a pair $\langle \textsc{Grade}, v' \rangle$, then $\textsc{Grade} = \textsc{Commit}$ and $v' = v$.
    \item[Consistency] If any correct process returns  $\langle \textsc{Commit}, v \rangle$, then no process returns $(\cdot, v' \ne v)$.
    \item[Termination] If all correct processes propose, then every correct process eventually returns.
\end{description}

\begin{lemma}
    With crash faults and $n=2f+1$, \Cref{alg:aac-crash} satisfies integrity.
\end{lemma}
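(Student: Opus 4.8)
The plan is to do a straightforward case analysis on which of the three return statements of \textsc{Round} fires, and in each case trace the returned value back to an \textsc{Init} message sent in phase~1, whose payload is \emph{by construction} the value that some process passed as the argument $v$ to \textsc{Round}. Because we are in the crash-fault model, every process that sends a message follows the protocol up to the point at which it sends it, so the value carried by any \textsc{Init} or \textsc{Echo} message is exactly the value the sender computed according to \Cref{alg:aac-crash}; there is no equivocation or fabrication to worry about, which keeps the argument purely syntactic.

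First I would record a phase-1 invariant: the value $proposal$ that a process $j$ places into its phase-2 \textsc{Echo} message (line~\ref{line:fac-crash-send-echo}) is either $\bot$, or some $v^*$ for which $j$ received at least $f+1$ \textsc{Init} messages carrying $v^*$ (line~\ref{line:fac-crash-if-init}); in the latter case $v^*$ is the $v$-argument of the \textsc{Round} call at the sender of one of those \textsc{Init} messages, hence ``proposed by some process.'' Consequently, every non-$\bot$ value that ever appears in an \textsc{Echo} message was proposed by some process.

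Then I would dispatch the three return points. If a process returns at line~\ref{line:fac-crash-commit} or line~\ref{line:fac-crash-adopt-det}, it returns some $v^* \ne \bot$ that it saw in at least one \textsc{Echo} message, and by the phase-1 invariant $v^*$ was proposed. If a process returns at line~\ref{line:fac-crash-adopt-rand}, it returns the value carried by the first \textsc{Init} message it received, which is the $v$-argument of the sender's \textsc{Round} call, hence proposed. In all three cases the returned value was proposed by some process, which is exactly integrity of \textsc{Round}.

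I do not expect a genuine obstacle here; the only points requiring care are (i) observing that the returned value is never $\bot$ (the guards on lines~\ref{line:fac-crash-commit} and~\ref{line:fac-crash-adopt-det} explicitly require $v^* \ne \bot$, and line~\ref{line:fac-crash-adopt-rand} copies an actual \textsc{Init} payload), and (ii) invoking the crash-model assumption to conclude message contents are authentic. If one wants integrity for the full \Cref{alg:skeleton} rather than a single \textsc{Round} call, a trivial induction on the round number closes the loop: the $v$-argument in round~$0$ is a consensus input, and in later rounds it is the value returned by the previous \textsc{Round}, which by this lemma was itself proposed to that round and hence ultimately traces back to a consensus input.
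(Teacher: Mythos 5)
Your proof is correct and follows essentially the same route as the paper: establish that \textsc{Init} messages carry only proposed values, hence \textsc{Echo} messages carry only proposed values or $\bot$, and then do the case analysis over the three return points (lines~\ref{line:fac-crash-commit}, \ref{line:fac-crash-adopt-det}, \ref{line:fac-crash-adopt-rand}). The closing remark about lifting integrity to \Cref{alg:skeleton} by induction on rounds is exactly the paper's separate \Cref{lem:crash-validity}, so nothing is missing.
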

\begin{proof}
    We say that a value $v$ is \textit{valid} if it is the input value of some process. We want to show that processes only return valid values. We observe that (1) \textsc{Init} messages only contain valid values (line~\ref{line:fac-crash-init}), and therefore (2) \textsc{Echo} messages only contain valid values or $\bot$ (lines~\ref{line:fac-crash-if-init}--\ref{line:fac-crash-send-echo}). If a process $p$ returns $v$ at line~\ref{line:fac-crash-commit} or~\ref{line:fac-crash-adopt-det}, then $p$ received at least one \textsc{Echo} message for $v$, and thus $v$ is valid by (2) above. If $p$ returns $v$ at line~\ref{line:fac-crash-adopt-rand}, then $p$ received at least one \textsc{Init} message for $v$, and thus $v$ is valid by (1) above.
\end{proof}

\begin{lemma}
    With crash faults and $n=2f+1$, \Cref{alg:aac-crash} satisfies strong validity.
\end{lemma}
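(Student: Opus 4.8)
The plan is to prove that if all correct processes propose $v$, then every process that returns does so with the pair $\langle\textsc{Commit},v\rangle$, in two stages: first rule out $1-v$ as an output value, then upgrade the grade to \textsc{Commit}. For the first stage, note that every correct process broadcasts $\langle\textsc{Init},r,v\rangle$ on line~\ref{line:fac-crash-init}, so the only $\langle\textsc{Init},r,1-v\rangle$ messages are those sent by the at most $f$ faulty processes; since $f < n-f = f+1$, no process can collect $\ge f+1$ Init messages carrying $1-v$, so the test on line~\ref{line:fac-crash-if-init} never selects $v^* = 1-v$. Hence every phase-two proposal lies in $\{v,\bot\}$, every $\langle\textsc{Echo},r,\cdot\rangle$ message carries $v$ or $\bot$, no process ever receives an Echo for $1-v$, and the tests on lines~\ref{line:fac-crash-commit} and~\ref{line:fac-crash-adopt-det} can fire only with $v^* = v$; so no returning process outputs $1-v$ via those lines.

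It then remains --- and this is the crux --- to show that every process in fact \emph{commits} $v$, i.e., returns via line~\ref{line:fac-crash-commit}. The clean route is to establish the stronger claim that no process ever sets its phase-one proposal to $\bot$: granting that, every process proposes $v$ on line~\ref{line:fac-crash-phase1-v}, every Echo carries $v$, the $n-f$ Echoes a returning process gathers on line~\ref{line:fac-crash-wait-echo} are all for $v$, and line~\ref{line:fac-crash-commit} fires with $v^* = v$. To rule out a $\bot$ phase-one proposal I would argue that the $n-f$ Init messages a process collects on line~\ref{line:fac-crash-wait-init} cannot be a mixture of $v$ and $1-v$, exploiting the structure of the crash model: at most $f$ dissenting Inits, each sender contributing at most one Init per round, and no equivocation.

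I expect this last stage to be the main obstacle. The first stage is a routine counting argument, but the step from ``$1-v$ is never returned'' to ``the grade is always \textsc{Commit}'' is delicate: a process whose phase-one quorum mixes $v$ and $1-v$ Inits would set its proposal to $\bot$ and echo $\bot$, and if enough processes do this, a correct process could bypass line~\ref{line:fac-crash-commit} and either adopt $v$ on line~\ref{line:fac-crash-adopt-det} or --- worse --- fall through to line~\ref{line:fac-crash-adopt-rand} and return the value of its first received Init, which might be $1-v$. Showing that the hypothesis that all correct processes propose $v$ in fact forbids any $\bot$ phase-one proposal is where I would focus the effort.
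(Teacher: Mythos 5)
There is a genuine gap, and it sits exactly at the step you defer to ``where I would focus the effort.'' Under the reading you adopt---only the \emph{correct} processes are assumed to propose $v$, while crash-faulty processes follow the protocol until they crash but may have proposed $1-v$---the claim you need (``no process ever sets its phase-one proposal to $\bot$'') is simply false, so no amount of effort will close it. Concretely, with $n=2f+1$ let the $f$ faulty processes propose $1-v$ and send their $\langle\textsc{Init},r,1-v\rangle$ messages before crashing. A correct process waits for only $n-f=f+1$ \textsc{Init} messages (line~\ref{line:fac-crash-wait-init}), and the scheduler may deliver it the $f$ faulty \textsc{Init}s plus one \textsc{Init} for $v$; this quorum is mixed, the test on line~\ref{line:fac-crash-if-init} fails, and the process proposes $\bot$ (line~\ref{line:fac-crash-phase1-bot}). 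Indeed every correct process can end up in this situation, after which a process can collect $f+1$ \textsc{Echo}s all equal to $\bot$, fall through to line~\ref{line:fac-crash-adopt-rand}, and return $\langle\textsc{Adopt},1-v\rangle$ if its first \textsc{Init} came from a faulty process---violating both the grade and the value in the conclusion. So your Stage~1 counting argument is fine as far as it goes, but the lemma is not provable under your interpretation of the hypothesis.

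The paper avoids this by reading the hypothesis as \emph{all} processes proposing $v$ (its proof literally begins ``If all processes propose the same value $v$''), which is the natural reading in the crash model where faulty processes execute the protocol faithfully until they crash, and which is also how the property is invoked in the agreement and termination proofs (consistency there is stated for \emph{all} processes, so every process---including not-yet-crashed faulty ones---feeds $v$ into the next \textsc{Round}). Under that reading the argument is immediate and your ``crux'' evaporates: no \textsc{Init} for any value other than $v$ exists, so every quorum of $n-f=f+1$ \textsc{Init}s is unanimous for $v$, every process proposes $v$ on line~\ref{line:fac-crash-phase1-v}, every \textsc{Echo} carries $v$, and every returning process gathers $f+1$ \textsc{Echo}s for $v$ and commits at line~\ref{line:fac-crash-commit}. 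In short: the fix is not a cleverer argument for your Stage~2, but recognizing that the hypothesis must constrain all processes' proposals, not only those of the processes that never crash.
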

\begin{proof}
    If all processes propose the same value $v$, then all processes send $\langle\textsc{Init}, v\rangle$ at line~\ref{line:fac-crash-init}; all processes receive at least $n/2$ $\langle\textsc{Init}, v\rangle$ messages (since $n-f = f+1 \geq n/2$); all processes adopt $v$ as their proposal for the second phase and send $\langle\textsc{Echo}, v\rangle$ at line~\ref{line:fac-crash-send-echo}; all processes receive $n-f=f+1$ $\langle\textsc{Echo}, v\rangle$ and return $\langle\textsc{Commit}, v\rangle$.
\end{proof}

\begin{lemma}\label{lem:fac-no-two-echo}
    If a process $p$ sends $\langle\textsc{Echo}, v\rangle$ at line~\ref{line:fac-crash-send-echo}, then no process sends $\langle\textsc{Echo}, v'\rangle$, for any $v'\ne v$.
\end{lemma}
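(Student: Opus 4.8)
The plan is to run the standard one-shot quorum-intersection argument, exploiting the $n=2f+1$ threshold together with the fact that crash-faulty processes behave honestly until they stop. First I would unfold the condition under which line~\ref{line:fac-crash-send-echo} can carry a value $v\neq\bot$: a process $p$ sends $\langle\textsc{Echo},r,v\rangle$ with $v\neq\bot$ only if the test on line~\ref{line:fac-crash-if-init} passed, i.e.\ $p$ received at least $f+1$ messages of the form $\langle\textsc{Init},r,v\rangle$. By channel integrity, these messages came from a set $Q_v\subseteq\Pi$ of at least $f+1$ distinct processes, each of which executed line~\ref{line:fac-crash-init} with value $v$ in round $r$.

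Next, toward a contradiction, suppose some process $q$ sends $\langle\textsc{Echo},r,v'\rangle$ for some $v'\notin\{v,\bot\}$. By the same reasoning there is a set $Q_{v'}$ of at least $f+1$ distinct processes that each sent $\langle\textsc{Init},r,v'\rangle$. Since $|Q_v|+|Q_{v'}|\ge 2f+2>2f+1=n=|\Pi|$, these sets intersect; fix $s\in Q_v\cap Q_{v'}$. The concluding step is to observe that each process executes line~\ref{line:fac-crash-init} exactly once per round $r$, sending a single value — and this holds for crash-faulty processes as well, since in the crash model they follow the protocol faithfully up to the step at which they halt. Hence $s$ sent $\langle\textsc{Init},r,\cdot\rangle$ for exactly one value, contradicting $s\in Q_v\cap Q_{v'}$ with $v\neq v'$. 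Therefore no process sends $\langle\textsc{Echo},r,v'\rangle$ with $v'\neq v$; the proposal $\bot$ is the uninteresting case already excluded by the hypothesis $v\neq\bot$, matching the value named $v^*$ in the discussion preceding the lemma.

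I do not expect a genuine obstacle: this is the textbook single-round quorum-intersection lemma. The one point to state carefully is why the ``one \textsc{Init} value per process'' fact applies to \emph{every} process and not merely the correct ones — this is precisely where the crash-fault assumption (as opposed to Byzantine behavior) is used, and it is what makes intersecting quorums of size $f+1$ sufficient even though $n<3f+1$. If one wanted to be fully pedantic, I would also note that $p$ itself running line~\ref{line:fac-crash-init} once is covered by the same observation, so the cases $q=p$ and $q\neq p$ need no separate treatment.
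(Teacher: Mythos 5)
Your proof is correct and takes essentially the same route as the paper's: a quorum-intersection argument on the two sets of at least $f+1$ senders of $\langle\textsc{Init}, v\rangle$ and $\langle\textsc{Init}, v'\rangle$ (which must overlap since $2(f+1) > n = 2f+1$), contradicting the fact that every process sends exactly one \textsc{Init} value per round. Your explicit remarks about the $\bot$ case and about crash-faulty processes following the protocol until they halt merely spell out what the paper's proof leaves implicit.
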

\begin{proof}
    Assume the lemma does not hold. Then $p$ must have received more than $n/2$ $\langle\textsc{Init}, v\rangle$ messages, and some process $p'$ must have received more than $n/2$ $\langle\textsc{Init}, v'\rangle$ for some $v'\ne v$. By quorum intersection, it follows that some process must have sent both an $\langle\textsc{Init}, v\rangle$ and $\langle\textsc{Init}, v'\rangle$ message. This is a contradiction, as processes only send one \textsc{Init} message at line~\ref{line:fac-crash-init}.
\end{proof}

\begin{lemma}
    With crash faults and $n=2f+1$, \Cref{alg:aac-crash} satisfies consistency.
\end{lemma}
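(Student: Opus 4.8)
The plan is to fix a correct process $p$ that returns $\langle\textsc{Commit}, v\rangle$ at line~\ref{line:fac-crash-commit} in some round $r$, and show that every process $p'$ that returns $(\cdot, v')$ in round $r$ must have $v' = v$. From $p$'s commit we know $v \neq \bot$ and that at least $f+1$ distinct processes sent $\langle\textsc{Echo}, r, v\rangle$ at line~\ref{line:fac-crash-send-echo}. In particular at least one process sent $\langle\textsc{Echo}, r, v\rangle$, so \Cref{lem:fac-no-two-echo} applies and tells us that no process sends $\langle\textsc{Echo}, r, v''\rangle$ for any non-$\bot$ value $v'' \neq v$. Hence every \textsc{Echo} message of round $r$ carries either $v$ or $\bot$.

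Next I would do a case analysis on the return line of $p'$. If $p'$ returns at line~\ref{line:fac-crash-commit} or line~\ref{line:fac-crash-adopt-det}, then $p'$ received at least one $\langle\textsc{Echo}, r, v''\rangle$ with $v'' \neq \bot$; by the previous paragraph $v'' = v$, so $p'$ returns $v$, as desired. The remaining case is $p'$ returning at line~\ref{line:fac-crash-adopt-rand}, where $p'$ adopts the value of the first \textsc{Init} message it received — a value not obviously related to $v$. This is the step I expect to be the main obstacle, and the key observation is that this case simply cannot occur once some correct process commits $v$ in round $r$: reaching line~\ref{line:fac-crash-adopt-rand} means $p'$ collected $n-f = f+1$ \textsc{Echo} messages, none for a non-$\bot$ value, so $f+1$ distinct processes sent $\langle\textsc{Echo}, r, \bot\rangle$. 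But $p$'s commit required $f+1$ distinct processes to send $\langle\textsc{Echo}, r, v\rangle$, and since each process executes line~\ref{line:fac-crash-send-echo} at most once per round, these two sets of senders are disjoint, yielding at least $2f+2 > n$ distinct processes — a contradiction with $n = 2f+1$.

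Therefore line~\ref{line:fac-crash-adopt-rand} is unreachable in round $r$, and the only reachable return points force $v' = v$, which establishes consistency. The one subtlety to flag is the precise reading of \Cref{lem:fac-no-two-echo}: it excludes two \emph{distinct non-$\bot$} \textsc{Echo} values (a $\langle\textsc{Echo}, r, \bot\rangle$ message may coexist with $\langle\textsc{Echo}, r, v\rangle$), which is exactly what the case analysis needs; the $\bot$-echoes are not ruled out by the lemma but are instead dispatched by the counting argument above.
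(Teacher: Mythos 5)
Your proof is correct and takes essentially the same approach as the paper: \Cref{lem:fac-no-two-echo} to exclude any second non-$\bot$ \textsc{Echo} value, combined with the $(f+1)+(f+1)>2f+1$ quorum-intersection count on phase-two senders to force $p'$ to see a $v$-echo, equivalently to rule out line~\ref{line:fac-crash-adopt-rand}. Your contradiction-style phrasing of the counting step and the explicit remark about $\bot$-echoes merely spell out what the paper's proof leaves implicit.
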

\begin{proof}
    If process $p$ returns $\langle\textsc{Commit}, v\rangle$, it must do so at line~\ref{line:fac-crash-commit}, after having received $f+1$ $\langle\textsc{Echo}, v\rangle$ messages. Therefore, every other process $p'$ that returns, must receive at least one $\langle\textsc{Echo}, v\rangle$ message. If $p'$ also receives $f+1$ $\langle\textsc{Echo}, v\rangle$ messages, then it returns $\langle\textsc{Commit}, v\rangle$. Otherwise, by \Cref{lem:fac-no-two-echo}, $p'$ cannot receive an \textsc{Echo} message for any other value $v'\ne v$, so $p'$ returns $\langle\textsc{Adopt}, v\rangle$ at line~\ref{line:fac-crash-adopt-det}.
\end{proof}

\begin{lemma}
    With crash faults and $n=2f+1$, \Cref{alg:aac-crash} satisfies termination.
\end{lemma}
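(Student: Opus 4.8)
The plan is to argue that \textsc{Round} in \Cref{alg:aac-crash} has only two instructions that can block a correct process — the wait for $n-f$ $\langle \textsc{Init}, r,\_ \rangle$ messages on line~\ref{line:fac-crash-wait-init}, and the wait for $n-f$ $\langle \textsc{Echo}, r,\_ \rangle$ messages on line~\ref{line:fac-crash-wait-echo} — and to show that, in every execution, each of these waits is passed after finitely many delivery steps. Everything else (selecting the phase-1 proposal on lines~\ref{line:fac-crash-if-init}--\ref{line:fac-crash-phase1-bot}, the two sends, and the final Commit/Adopt case split) is purely local and non-blocking, so once both waits complete the process returns.

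First I would handle line~\ref{line:fac-crash-wait-init}. By hypothesis every correct process invokes \textsc{Round}$(r,\cdot)$, and before reaching the wait each such process sends $\langle \textsc{Init}, r, v \rangle$ to all processes on line~\ref{line:fac-crash-init}. With $n = 2f+1$ there are at least $n-f = f+1$ correct processes, hence at least $n-f$ messages of the form $\langle \textsc{Init}, r,\_ \rangle$ are sent to each correct process $i$; by the no-loss property of the links, all of them are eventually delivered to $i$, so $i$ eventually collects $n-f$ of them and moves past line~\ref{line:fac-crash-wait-init}. An identical argument applies to line~\ref{line:fac-crash-wait-echo}: by the previous step every correct process reaches line~\ref{line:fac-crash-send-echo} and sends an \textsc{Echo} message for round $r$ to all, so each correct process eventually receives $n-f$ such messages. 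After the echo wait, process $i$ executes one of the three return branches; note that line~\ref{line:fac-crash-adopt-rand} is well defined because $i$ has already received at least $n-f \ge 1$ \textsc{Init} messages in phase~1. Hence every correct process returns.

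I do not expect any real obstacle. Unlike the termination argument for \Cref{alg:aac-byz}, this proof does not invoke any BRB property — plain reliable point-to-point links suffice, and the argument is just counting correct senders. The one thing to keep in mind is that the random scheduler neither helps nor hurts here: the no-loss guarantee is part of the link model regardless of the scheduling distribution, and assumption~(\ref{ass:lb}) is only needed for the probabilistic (\whp and \as) claims elsewhere, not for this deterministic termination statement.
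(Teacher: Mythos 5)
Your proof is correct and follows essentially the same argument as the paper's: the only blocking steps are the two waits for $n-f$ messages, which complete because at least $n-f$ correct processes send the corresponding round messages and the links are reliable. You simply spell out the counting and the well-definedness of line~\ref{line:fac-crash-adopt-rand} in more detail than the paper does.
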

\begin{proof}
    This follows immediately from the construction of the algorithm. Processes perform two phases; the only blocking step of each phase is waiting for $n-f$ messages (lines~\ref{line:fac-crash-wait-init} and~\ref{line:fac-crash-wait-echo}). This waiting eventually terminates, since there are at least $n-f$ correct processes.
\end{proof}

Now we can show that the consensus protocol in \Cref{alg:skeleton} is correct under crash faults.

\begin{theorem}\label{thm:validity-crash}
    With crash faults, \Cref{alg:skeleton} satisfies validity.
\end{theorem}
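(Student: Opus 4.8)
The plan is to reduce validity of the full protocol (\Cref{alg:skeleton}) to the integrity property of the \textsc{Round} procedure (\Cref{alg:aac-crash}) established above, via a simple induction on the round number. The starting point is that a process decides only at line~\ref{line:skeleton-decide}, i.e., exactly when its call to \textsc{Round} returns a pair $\langle \textsc{Commit}, v\rangle$; so it suffices to prove that every value ever returned by any invocation of \textsc{Round}, in any round, is the input value of some process.

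Concretely, for each $r \geq 0$ let $P_r$ denote the set of values that some process passes to \textsc{Round} as its second argument in round $r$ (the set of current $est$-values at the start of that iteration). I would prove by induction on $r$ that every element of $P_r$ is the input value of some process; integrity of \textsc{Round} then immediately gives that every value returned by \textsc{Round} in round $r$ is also the input of some process. The base case $r=0$ is immediate from the initialization $est_i \gets v_i$, which forces $P_0 \subseteq \{v_i : i \in \Pi\}$, so integrity settles round $0$. For the inductive step, any process $p$ entering round $r+1$ assigned $est_p$, during its round-$r$ iteration, the value returned by $\textsc{Round}(r,\cdot)$; by the induction hypothesis together with integrity of \textsc{Round} for round $r$, that value is some process's input, so $P_{r+1}$ again contains only process inputs, and integrity closes the step.

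Having established the claim, the conclusion is immediate: if a process decides $v$, it did so because $\textsc{Round}(r,\cdot)$ returned $\langle\textsc{Commit},v\rangle$ for some $r$, whence $v$ is the input of some process. I do not expect a genuine obstacle here; crash faults cause no trouble, since crashed processes merely stop, which can only shrink the sets $P_r$, and the argument never invokes liveness of any process. The one point deserving care is to keep the two senses of ``propose'' apart---integrity of \textsc{Round} speaks of values proposed \emph{to \textsc{Round}}, and the induction is exactly the device that promotes this to values proposed \emph{to consensus}.
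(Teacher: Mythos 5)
Your proof is correct and follows essentially the same route as the paper: the paper's Lemma~\ref{lem:crash-validity} is exactly your induction on rounds showing that all inputs to \textsc{Round} are consensus inputs, and the conclusion via the integrity property of \textsc{Round} at the decision point matches the paper's proof of \Cref{thm:validity-crash}.
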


To prove the theorem, we first prove the following lemma:
\begin{lemma}\label{lem:crash-validity}
    If a process proposes $v$ to the \textsc{Round} procedure, then $v$ is the consensus input of some process.
\end{lemma}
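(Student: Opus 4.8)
The plan is to prove \Cref{lem:crash-validity} by induction on the round number $r$ of the skeleton in \Cref{alg:skeleton}. For the base case $r=0$: in the \textsc{Propose} procedure each process first sets $est_i \gets v_i$ and then invokes $\textsc{Round}(0, est_i)$, so the value proposed to \textsc{Round} in round $0$ is exactly that process's consensus input, and the claim holds trivially.

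For the inductive step, suppose that every value proposed to \textsc{Round} in round $r$ is the consensus input of some process. A process that proposes a value to \textsc{Round} in round $r+1$ does so after executing $est_i \gets v$, where $(g,v) = \textsc{Round}(r, est_i)$ is the pair returned by its round-$r$ invocation. By the Integrity property of the \textsc{Round} procedure (the lemma proved immediately above \Cref{lem:crash-validity}), $v$ must have been proposed to that \textsc{Round} instance by some process; by the induction hypothesis, any such value is the consensus input of some process. Hence the value proposed in round $r+1$ is the consensus input of some process, completing the induction.

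The one point that needs care is that the argument must cover faulty processes too, not only correct ones, since Integrity of \textsc{Round} is phrased in terms of a value "proposed by some process" without restricting to correct ones. This is harmless in the crash model: a faulty process follows the protocol until the instant it crashes, so any value it feeds into a \textsc{Round} invocation was obtained through the same $est_i$-update rule, and the induction applies uniformly to all processes. I do not anticipate any real obstacle; the only thing to verify is that the Integrity statement for \Cref{alg:aac-crash} is stated for "some process" rather than "some correct process", which it is.

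With \Cref{lem:crash-validity} in hand, \Cref{thm:validity-crash} follows at once: a value decided at line~\ref{line:skeleton-decide} of \Cref{alg:skeleton} is a value $v$ returned with grade $\textsc{Commit}$ by some \textsc{Round} invocation, hence by Integrity of \textsc{Round} it was proposed to that invocation, and by \Cref{lem:crash-validity} that value is the consensus input of some process.
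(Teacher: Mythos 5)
Your proof is correct and follows essentially the same route as the paper: induction on the round number, with the base case given by the \textsc{Propose} procedure and the inductive step combining the induction hypothesis with the Integrity property of \textsc{Round}. The extra remark about crash-faulty processes following the protocol until they crash is a fine clarification but does not change the argument.
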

\begin{proof}
    We prove the result by induction on the round $r$. For the base case: If $r = 0$, then all processes input their consensus inputs into $\textsc{Round}$. For the induction case: assume the lemma holds up to $r > 0$, and consider the case of $r + 1$. By the induction hypothesis and the integrity property of the \textsc{Round} procedure, any output of $\textsc{Round}(r,\cdot)$ must be the consensus input of some process. Since the input of $\textsc{Round}(r+1,\cdot)$ is the output of $\textsc{Round}(r,\cdot)$, the lemma must also hold for the case of $r+1$. This completes the induction.
\end{proof}

\begin{proof}[Proof of \Cref{thm:validity-crash}]
     If a process $p$ decides a value $v$, then the $\textsc{Round}$ procedure must have output $(\textsc{Commit}, v)$ at line \ref{line:skeleton-decide}. By \Cref{lem:crash-validity} and the integrity property of the \textsc{Round} procedure, it follows that $v$ is the consensus input of some process.
\end{proof}

\begin{theorem}\label{thm:crash-agreement}
    With crash faults, \Cref{alg:skeleton} satisfies uniform agreement.
\end{theorem}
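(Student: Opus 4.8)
The plan is to follow the template of the proof of \Cref{thm:agreement-byz}, adapting it to the crash model and, crucially, to the \emph{uniform} flavour of agreement. Concretely, I would let $r$ be the earliest round at which any process returns $(\textsc{Commit},\cdot)$ from its \textsc{Round} invocation; by line~\ref{line:skeleton-decide} this is also the earliest round at which any process decides. If no such round exists, no process ever decides and uniform agreement holds vacuously, so assume $r$ exists and let $v$ be a value committed at round $r$.

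The first key step is to observe that the consistency property of \textsc{Round} (\Cref{alg:aac-crash}) in fact holds in a uniform form: \emph{if any process---correct or faulty---returns $(\textsc{Commit},v)$ from $\textsc{Round}(r,\cdot)$, then every process that returns from $\textsc{Round}(r,\cdot)$ returns $(\cdot,v)$}. The proof of the consistency lemma never uses correctness of the committing process: it only uses that the committer collected $f+1$ $\langle\textsc{Echo},r,v\rangle$ messages, that any other returning process collected $n-f=f+1$ \textsc{Echo} messages (so the two quorums intersect), and \Cref{lem:fac-no-two-echo}---all of which go through verbatim because crash-faulty processes do not equivocate on their \textsc{Echo} (or \textsc{Init}) messages. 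Applying this at round $r$: every process that decides at round $r$ decides $v$, and every correct process that returns from $\textsc{Round}(r,\cdot)$ sets $est \gets v$.

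Next I would propagate $v$ to round $r+1$. By a routine induction on rounds using the termination property of \textsc{Round}, every correct process completes every round; hence every correct process enters $\textsc{Round}(r+1,\cdot)$ with proposal $v$. The strong validity property of \textsc{Round} then forces every process (correct or faulty) that returns from $\textsc{Round}(r+1,\cdot)$ to return $(\textsc{Commit},v)$. Consequently every process that decides at round $r+1$ decides $v$; moreover, since a process that returns from $\textsc{Round}(r+1,\cdot)$ decides at round $r+1$ unless it has already decided, and decisions are taken at most once, no process decides at any round strictly greater than $r+1$. Combining this with the previous step and with the minimality of $r$ (no process decides before round $r$), every process that decides does so at round $r$ or $r+1$ with value $v$, which is uniform agreement.

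The main obstacle is precisely the uniformity subtlety in the second step: because uniform agreement constrains even the decisions of processes that later crash, one cannot simply invoke the consistency lemma as stated (whose hypothesis concerns a \emph{correct} committer) but must notice that its proof is insensitive to whether the committer is faulty---the only property of crash faults this uses is the absence of equivocation. A secondary point requiring care is ruling out decisions after round $r+1$, which hinges on the ``decide once'' semantics of \Cref{alg:skeleton} together with the fact that every process still running at round $r+1$ commits $v$ there.
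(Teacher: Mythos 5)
Your proof is correct and follows essentially the same route as the paper's: take the earliest round $r$ with a commit, use the consistency of \textsc{Round} to force every returning process to carry $v$ into round $r+1$, and then use strong validity to make every later decision equal $v$. Your explicit observation that the consistency lemma's proof never uses correctness of the committing process (so it applies to a crash-faulty decider, as uniform agreement requires) is a point the paper's own proof glosses over, and it is a welcome clarification rather than a deviation.
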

\begin{proof}
    Let $r$ be the earliest round at which some process decides and let $p$ be a process that decides $v$ at round $r$. We will show that any other process $p'$ that decides, must decide $v$. 
    
    For $p$ to decide $v$ at round $r$, \textsc{Round} must output $(\textsc{Commit}, v)$ in that round. Thus, by the consistency property of \textsc{Round}, $\textsc{Round}(r,\cdot)$ must output $(\cdot, v)$ at all processes. If $\textsc{Round}(r,\cdot)$ outputs $(\textsc{Commit}, v)$ for $p'$, then $p'$ decides $v$ at round $r$ (line~\ref{line:skeleton-decide}). Otherwise, no other value than $v$ can be input to $\textsc{Round}(r+1,\cdot)$, and by the strong validity property, all processes (including $p'$) will output $(\textsc{Commit}, v)$ and decide $v$ at round $r+1$.
\end{proof}

\begin{theorem}\label{lem:termination-crash}
    With crash faults, \Cref{alg:skeleton} satisfies termination almost surely.
\end{theorem}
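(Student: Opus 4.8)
The plan is to follow the template of \Cref{thm:termination-byz}, using the four properties of the crash-tolerant \textsc{Round} procedure established above for \Cref{alg:aac-crash} (integrity, strong validity, consistency, termination). First observe that, since the \textbf{while} loop of \Cref{alg:skeleton} never exits, every correct process keeps proposing to \textsc{Round} in every round, so by the termination property of \textsc{Round} every correct process completes every round. Next, note that if some correct process ever returns $(\textsc{Commit},v)$ from \textsc{Round} in a round $r$, then by consistency every correct process returns $(\cdot,v)$ in round $r$ and hence enters round $r+1$ with $est=v$; by strong validity every correct process then returns $(\textsc{Commit},v)$ in round $r+1$ and decides at line~\ref{line:skeleton-decide} if it has not done so already. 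Thus it suffices to show that, almost surely, some correct process eventually returns $(\textsc{Commit},\cdot)$.

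To this end, model the execution as a Markov chain whose state before round $r$ is the number of correct processes whose variable $est$ equals $0$ at that point; since $n=2f+1$ and there are at least $n-f$ correct processes, the state ranges over $\{0,\dots,c\}$ with $c\ge n-f$. By strong validity of \textsc{Round}, the extreme states $0$ and $c$ are absorbing, and each of them forces a commit when it is the starting state of a round. It therefore remains to show that from every non-absorbing state the probability of reaching an absorbing state in one round is bounded below by a constant that depends only on $n$ and $f$.

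For that I would exhibit a favorable schedule for a single invocation of \textsc{Round} from a non-absorbing state. Fix a set $Q$ of exactly $n-f=f+1$ correct processes containing at least one process with estimate $0$ and at least one with estimate $1$ (possible since the state is non-absorbing and $c\ge n-f$), together with a distinguished $q_0\in Q$. The favorable schedule first flushes, on each link from a member of $Q$ to a correct process, the stale messages preceding that member's round-$r$ \textsc{Init} (these are ignored on arrival, being from an earlier round), and then delivers to each correct process first $q_0$'s round-$r$ \textsc{Init} and then the round-$r$ \textsc{Init}s of the remaining members of $Q$, before any other round-$r$ \textsc{Init} reaches it. Since $|Q|=f+1$ and $Q$ carries both values, no correct process ever observes $f+1$ identical \textsc{Init}s, so every correct process sets its phase-2 proposal to $\bot$ (line~\ref{line:fac-crash-phase1-bot}) and broadcasts $\langle\textsc{Echo},r,\bot\rangle$; the schedule then feeds each correct process $n-f$ of these $\bot$-valued \textsc{Echo}s, produced by the $n-f$ correct processes, and nothing else, so every correct process falls through to line~\ref{line:fac-crash-adopt-rand} and adopts the value of the first \textsc{Init} it received, namely $q_0$'s estimate. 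Hence all correct processes leave round $r$ with a common estimate, i.e.\ the chain reaches an absorbing state. This is a finite sequence of scheduling draws, each available whenever the corresponding link is nonempty, so by assumptions~(\ref{ass:lb}) and~(\ref{ass:independence}) it has positive probability; combined with the Markov-chain argument, this gives almost-sure absorption, and hence termination almost surely.

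The main obstacle is upgrading ``positive probability'' to ``bounded below by a constant depending only on $n$ and $f$,'' which the Markov step genuinely needs: a priori the number of stale messages queued on a link at the start of round $r$ is not uniformly bounded in $r$, so the naive bound $C(n,f)^{L}$ (with $L$ the length of the favorable schedule) degrades with $r$. This is exactly the technical point underlying the proof of \Cref{thm:termination-byz} (there further complicated by the internal messages of BRB), and I would dispatch it analogously --- either by arguing that the favorable transition only requires winning finitely many independent per-step races whose probabilities do not depend on the backlog, or by a supplementary argument that the per-link backlog is small with overwhelming probability and folding the negligible remainder into the ``almost surely.'' A secondary check, also mirroring the Byzantine proof, is that no round-$r$ message from a not-yet-crashed but possibly faster faulty process has already steered a correct process toward a different estimate; since in the favorable schedule each correct process is fed only the $n-f$ correct $\bot$-\textsc{Echo}s, this is immediate for phase~2, and the analogue for phase~1 is routine.
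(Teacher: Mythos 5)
Your proof follows essentially the same route as the paper's: a Markov chain on estimate counts with absorbing extreme states, plus a one-round favorable schedule that makes every phase-1 view mixed, forces all \textsc{Echo} messages to carry $\bot$, and sends everyone to line~\ref{line:fac-crash-adopt-rand} to adopt the value of the first \textsc{Init} received. The genuine gap is in how you define the state: you count only \emph{correct} processes and invoke strong validity to claim that the states $0$ and $c$ are absorbing and force a commit. In the crash model, faulty processes follow the protocol until they crash, and \Cref{alg:aac-crash} does not actually give you the correct-processes-only form of strong validity: if all correct processes enter a round with estimate $v$ while a still-live faulty process holds $1-v$, then no one can assemble $f+1$ \textsc{Init}s for $1-v$, yet a correct process can receive a mixed phase-1 quorum, set its proposal to $\bot$ (line~\ref{line:fac-crash-phase1-bot}), see only $\bot$-valued \textsc{Echo}s, fall through to line~\ref{line:fac-crash-adopt-rand}, and adopt $1-v$ because its first \textsc{Init} came from that live faulty process. (The paper's proof of the strong-validity lemma for \Cref{alg:aac-crash} in fact assumes that \emph{all} processes propose $v$.) So your ``absorbing'' states are not absorbing for the implemented \textsc{Round}, and they do not force a commit; moreover your favorable schedule controls deliveries only to correct processes, so it can leave live faulty processes with the opposite estimate and reopen the problem in the next round. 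The paper sidesteps this by defining the chain over all $2f+1$ processes and by having the favorable schedule drive \emph{every} process to the adopt-first-\textsc{Init} branch; the fix for your argument is the same --- track all (live) processes' estimates and extend the favorable schedule so that faulty-but-not-yet-crashed processes also adopt $q_0$'s value.

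On your secondary concern about a uniform lower bound on the one-round transition probability (backlogs on links growing with the round number): this is a legitimate subtlety, but the paper's own proof does not address it either --- it only exhibits a per-round schedule of non-zero probability and concludes almost-sure absorption --- so your explicit but unresolved discussion is not a gap relative to the paper; completing it (e.g., by bounding the probability of the required \emph{relative} delivery orderings independently of backlog sizes) would strengthen both your argument and the paper's.
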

\begin{proof}
    We can describe the execution of the protocol as a Markov chain with states $0,\ldots,2f+1$; the system is at state $i$ if $i$ processes have estimate ($est_i$ variable) equal to $0$ before invoking $\textsc{Round}$. Due to the strong validity property of the $\textsc{Round}$ procedure, states $0$ and $2f+1$ are absorbing states. There is a non-zero transition probability from each state, other than $0$ and $2f+1$, to each other state (we show this below). Therefore, almost surely, the system will eventually reach one of the two absorbing states and remain there. Once this happens (all processes have the same $est_i$ variable), the strong validity property of $\textsc{Round}$ ensures that all processes (who have not decided yet) will decide within a round.
    
    It only remains to show that there is a non-zero transition probability from each state, other than $0$ and $2f+1$, to each other state. Consider a state $i \notin \{0,2f+1\}$ and a state $j$; there is a schedule $S$ with non-zero probability which leads the system from $i$ to $j$ in one invocation of \textsc{Round}. In $S$, every process receives \textsc{Init} messages for both $0$ and $1$ at line~\ref{line:fac-crash-wait-init}, and thus all processes send \textsc{Echo} messages with $\bot$ at line~\ref{line:fac-crash-send-echo}. Thus, all processes return at line~\ref{line:fac-crash-adopt-rand}. In $S$, $j$ processes receive an \textsc{Init} message with value $0$ first, so they adopt $0$, and $2f+1-j$ processes receive an \textsc{Init} message with value $1$ first, so they adopt $1$, bringing the system to state $j$.

\end{proof}

\bibliographystyle{plainurl}
\bibliography{main}

\end{document}
\endinput